\newtheorem{proposition}{Proposition}
\newtheorem{assumption}{Assumption} 
\newtheorem{lemma}{Lemma}
\newtheorem{corollary}{Corollary}
\newtheorem{remark}{Remark}
\newcommand{\R}{{\mathbb R}} 
\newcommand{\C}{{\mathcal C}} 
\newcommand{\rstar}{r^{\star}}
\newcommand{\xeq}{x^{\star}}
\newcommand{\xeqa}{x_1^{\star}}
\newcommand{\xeqb}{x_2^{\star}}
\newcommand{\omegaeq}{\omega^{\star}}
\newcommand{\muini}{\mu^{{\rm in},i}}
\newcommand{\muouti}{\mu^{{\rm out},i}}
\newcommand{\indi}[1]{\mathbbm{1}_{#1}}
\title{Inmate population models with nonhomogeneous sentence lengths and their effects in an epidemiological model}
\author{{\sc P. Gajardo$^{1}$ and V. Riquelme$^{1}$}\\[2mm]
$^{1}$ Departamento de Matem\'atica, Universidad T\'ecnica Federico Santa Mar\'ia,\\Avenida Espa\~na 1680, Valpara\'iso, Chile\\[2mm]
{\tt pedro.gajardo@usm.cl, victor.riquelmef@usm.cl}\\[2mm]
}
\begin{document}
\maketitle

\begin{abstract}
In this work, we develop an inmate population model with a sentencing length structure. The sentence length structure of new inmates represents the problem data and can usually be estimated from the histograms corresponding to the conviction times that are sentenced in a given population. 
We obtain a transport equation, typically known as the McKendrick equation, the homogenous version of which is included in population models with age structures. Using this equation, we compute the inmate population and entry/exit rates in equilibrium, which are the values to consider in the design of a penitentiary system. With data from the Chilean penitentiary system, we illustrate how to perform these computations. In classifying the inmate population into two groups of sentence lengths (short and long), we incorporate the SIS (susceptible-infected-susceptible) epidemiological model, which considers the entry of infective individuals. We show that a failure to consider the structure of the sentence lengths---as is common in epidemiological models developed for inmate populations---for prevalences of new inmates below a certain threshold induces an underestimation of the prevalence in the prison population at steady state. The threshold depends on the basic reproduction number associated with the nonstructured SIS model with no entry of new inmates. We illustrate our findings with analytical and numerical examples for different distributions of sentencing lengths.

\emph{Keywords:   Prison population dynamics, sentencing length structure, McKendrick equation, SIS epidemiological model, health in prisons}
\end{abstract}

\section{Introduction}

Currently, several communicable diseases, such as sexually transmitted infections (STIs), remain a public health problem that is far from being controlled \cite{world2016}. Generally, prisons present far higher prevalences of certain diseases than the general population. This is due, among many factors, to the existence of crowded environments, high-risk behaviors such as unprotected sexual relations \cite{kouyoumdjian2012}, and the increase in the probability of the appearance of disease risk factors, such as depression and drug use, during imprisonment. Another relevant factor in penitentiary systems is related to the deficiencies of prison health systems, which imply the existence of barriers to access to care, delayed diagnoses and prolonged contagion times \cite{belenko2008, khan2009incarceration, MariaetAl2011,kouyoumdjian2012}. This health problem is not only a penitentiary concern but also a general social issue because prisons act as reservoirs for diseases, which are later transmitted to the community when inmates are released or come into contact with the outside population, such as visitors and prison workers. For the aforementioned reasons, the World Health Organization (WHO) includes prisoners among the key populations that should be the focus of interventions designed to reduce the burden of diseases \cite{who2007,world2016}.

Mathematical models of communicable diseases in penitentiary systems can be found in the recent literature \cite{Beauparlant:2016,Brauer:2001,castillo2020,Legrand:2008,Witbooi:2017}. The main objective of these models is to develop and assess control strategies regarding the spread of these diseases. The main characteristics of these models and two of the difficulties in conducting a theoretical analysis are the entry or immigration of infected individuals and a variable population size. These features are also present in age-structured epidemiological models with vertical transmission, but as indicated in \cite{Brauer:2001}, vertical transmission models generally include a flow of new infectives proportional to the number of infectives already in the population and thus may have a disease-free equilibrium \cite{Busenberg:2012}. To the best of our knowledge, the first theoretical analysis for this class of models was conducted in \cite{Brauer:2001}, where the SIS (susceptible-infected-susceptible) and SIR (susceptible-infected-recovered) models are studied. 

These previous studies ignore the sentence length structure of inmates and assume the entry (and exit) rate of new inmates to be the inverse of the average sentence length (or residence time in prisons). However, it seems appropriate to consider the sentence length structure of inmates in epidemiological models, since individuals with a longer sentence length are exposed to the disease under study for a longer time.

Under the framework of discrete-time models, we find epidemiological models developed for prison populations that consider the sentence length structure. In \cite{Ayer:2019}, an age-structured operational research model is developed, where the sentence length structure is included for studying hepatitis C treatment strategies in U.S. prisons. In \cite{Gani:1997}, a model (also in discrete time) is introduced for the spread and control of HIV in prison populations. In this work, the authors consider two classes of sentencing lengths (short and long) and conclude, from numerical simulations, that the epidemic is not very sensitive to the length of sentences in the range of parameters used in their simulations.

In this paper, we first introduce an inmate population model with a sentencing length structure. We obtain a transport equation in the form of a partial differential equation, typically known as the McKendrick equation, the homogenous version of which is included in population models with an age structure \cite{BrauerCastillo:2012}. This equation allows us to compute the inmate population and the entry/exit rates at steady state, which are values to consider in the design of a penitentiary system. Our main objective is not to solve or analyze the obtained McKendrick equation, which can be achieved using the method of characteristics. Instead of coupling the McKendrick equation with an epidemiological model (which for homogeneous equations ---obtained from age-structured population models--- is analyzed in \cite{Akushevich:2019,BrauerCastillo:2012,Kuniya:2015}), we divide the inmate population into two classes depending on their sentencing lengths (short and long), and in the obtained model, we incorporate the SIS epidemiological model. This epidemiological model is compared, in equilibrium, with the model obtained when ignoring the sentencing length structure. We prove that failing to consider the structure of the sentence lengths for prevalences of new inmates\footnote{We assume that the prevalence in new inmates is the steady-state prevalence of the population outside the prison.} below a certain threshold induces an underestimation of the prevalence in the prison population at steady state. The threshold depends on the basic reproduction number associated with the nonstructured SIS model with no entry of new inmates.

The remainder of the paper is organized as follows: In Section \ref{sec:modelo}, we develop a model for an inmate population that considers an nonhomogeneous distribution of the sentence lengths of new inmates and the removal rate from the prison. We perform the analysis at steady state for the model, and we obtain expressions for the population distribution (with respect to the remaining sentence time) and the population size with respect to the data of the problem. Using real data from a prison in Chile, in Section \ref{subsec:examples} we illustrate how to obtain these key values. In Section \ref{sec:SIS}, we study the effect of considering one or two classes of initial sentencing lengths in the estimation of the proportion of infected people in an epidemiological population model in equilibrium. We provide conditions under which the homogeneous model (single class without a sentencing length structure) underestimates the number of infective individuals relative to the two-class model. Finally, in Section \ref{subsec:simulations}, we illustrate our findings with two numerical examples for different distributions of initial sentencing lengths and then conclude with some final remarks in Section \ref{sec:conclusion}.

\section{Inmate population models with sentence length structure}\label{sec:modelo}

\subsection{Population model with inhomogeneous sentence time structure}

In this section, we develop a population model for a prison population consisting of a continuous-state, continuous-time model for the number of inmates inside the prison. Suppose that at time $t$, prisoners enter the prison at a rate $\lambda_t\geq0$ (the number of people per unit of time) corresponding to inmates newly sentenced, or transferred from other institutions, with a distribution of initial sentence length $q_t(dr)$ that, for each $t\geq0$, is a nonnegative measure on $\mathcal B(\C)$, the Borel $\sigma-$algebra on the set of sentence times $\C=(0,\infty)$, with total mass 1 (that is, a probability measure). Consider the existence of a removal rate of prisoners $d_t\geq0$ corresponding to transfer of inmates to other institutions, pardon or commutation of sentences, death, etc., in units of time$^{-1}$, which is independent of their respective remaining sentence time in prison.

Throughout this work, we make the following assumption regarding the distribution of initial sentence lengths $q_t(\cdot)$:

\begin{assumption}\label{ass:measures}
For all $t \ge 0$, the distribution $q_t(\cdot)$ is a mixture measure of a continuous distribution and a discrete distribution with finite support. That is, it can be decomposed as
\begin{equation*}
q_t(dr) = q_{a,t}(dr) + \sum_{r_t^j\in \mathcal D_t} p_t^j \delta_{r_t^j}(dr),
\end{equation*}
where, for all $t \ge 0$, one has the following:
\begin{itemize}\itemsep0em
\item $q_{a,t}(\cdot)$ is a nonnegative measure, absolutely continuous with respect to the Lebesgue measure on $\R$; that is, there exists a density function $q_{a,t}'(\cdot)$ that is Lebesgue measurable, such that $q_{a,t}(I)=\int_I q_{a,t}'(r)dr$ for all $I\in\mathcal B(\C)$.
\item $\mathcal D_t\subseteq\mathcal C$ is a finite (possibly empty) set.
\item For all $r_t^j\in \mathcal D_t$, $\delta_{r_t^j}(dr)$ denotes the Dirac measure concentrated on $\{r_t^j\}$.
\item For all $r_t^j\in \mathcal D_t$, $p_t^j>0$.
\end{itemize}

Moreover, we assume that for all $t\geq0$, the mean value of the distribution $q_t(\cdot)$, denoted by $T_{q_t}=\int_{(0,\infty)}rq_t(dr)$, is finite.
\end{assumption}

Under Assumption \ref{ass:measures}, $q_t(\cdot)$ has a generalized density function $q_t'(r)$, in the sense that we can write
\begin{equation*}
q_t'(r) = q_{a,t}'(r) + \sum_{r_t^j\in \mathcal D_t}p_t^j \delta_{r_t^j}'(r),
\end{equation*}
where the Dirac generalized function $\delta_{x}'(r)$ can be regarded as a density for the Dirac measure $\delta_{x}(dr)$, in the sense that
\begin{equation*}
\delta_{x}(I)=\int_I \delta_{x}'(r)dr = \left\{\quad
\begin{split}
1, & \quad x\in I,\\
0, & \quad x\notin I.
\end{split}
\right.
\end{equation*}

In what follows, we do not make distinctions between the absolutely continuous and discrete parts of $q_t(\cdot)$ under the sign of the integral, meaning the following: for every Borel-measurable function $f:\C\rightarrow\R$,
\begin{equation*}
\int_{\C} f(r)q_t(dr) = \int_{\C} f(r)q_{a,t}'(r)dr + \sum_{r_t^j\in \mathcal D_t} f(r_t^j)p_t^j.
\end{equation*}

\begin{remark}\label{rem:Tq}
If $\nu(\cdot):\mathcal B(\R_{+})\rightarrow\R_{+}$ is a nonnegative Borel measure concentrated on the set $\R_{+}$ and $f:\R_+\rightarrow\R$ is a Borel-measurable function, we have
\begin{equation*}
\int_{\R_{+}}f(x)\nu(dx) = \int_0^{\infty} \nu(\{x\in\R_{+}\,|\, f(x)>\alpha\})d\alpha,
\end{equation*}
according to \cite[Theorem 2.9.3]{Bog2007}. Since for all $t\geq 0$, $q_t(\cdot)$ is concentrated on the positive real line, its mean value $T_{q_t}:=\int_0^{\infty}rq_t(dr)$ can be computed as
\begin{equation}\label{eq:Tq}
T_{q_t} = \int_0^{\infty} q_t( (r,\infty) ) dr.
\end{equation}
\end{remark}

Given a set $I\subseteq\C$, let us define $N_t^{I}$ as the number of prisoners at time $t$, whose remaining times in prison belong to the set $I$. Let us suppose that, for each $t$, there exists a population density $\rho_t^r$ (the number of people per unit of time), with respect to the remaining time in prison $r$, such that
\begin{equation*}
N_t^I = \int_I \rho_t^r dr, \quad \forall I\in\mathcal B(\C).
\end{equation*}
Define $E_t^{I}$ and $D_t^{I}$ as the rates of prisoners (number of prisoners per unit of time) that enter the prison and are removed from prison, respectively, at time $t$, whose remaining sentence times belong to $I\in\mathcal B(\C)$.

The following proposition characterizes the population density $\rho_t^r$ as a function of the rates $\lambda_t$ and $d_t$ and the distribution $q_t(\cdot)$.

\begin{proposition}\label{prop:mckendrick}
Suppose that the functions $t\mapsto \lambda_t$ and $t\mapsto d_t$ are continuous and that the family of measures $(q_t(\cdot))_{t\geq0}$ satisfies the following continuity hypothesis:
\begin{equation}\label{hip:medidas_cuerpo}
\forall t\geq0,\forall A\in \mathcal B(\R_+),\forall \varepsilon>0,\exists\eta\in(0,1):~ |z|,|z'|<\eta\Rightarrow |q_{t+z}(A+z')-q_{t}(A+z')|<\varepsilon.
\end{equation}
Let us consider $0\leq s < r$. Then, the number of prisoners whose remaining sentence times belong to the interval $(s,r]$ satisfies the equation
\begin{equation}\label{eq:balance_dt_muerte}
\frac{d}{dt} N_{t}^{(s,r]} = \rho_t^{r} - \rho_t^{s}
+ \lambda_{t} q_{t}((s,r]) - d_t N_t^{(s,r]},\quad \mbox{ a.e. } t\geq0 .
\end{equation}

Therefore, under Assumption \ref{ass:measures}, the density of the prison population $\rho_t^r$ satisfies the McKendrick transport equation
\begin{equation}\label{eq:pde_muerte}
\frac{\partial \rho_t^{r}}{\partial t} + d_t\rho_t^{r} \,  = \frac{\partial\rho_t^{r}}{\partial r} + \lambda_{t} q'_{t}(r), \quad \mbox{ a.e. } t\geq0, r\geq0 ,
\end{equation}
from an initial distribution $(\rho_0^r)_{r\geq0}$.
\end{proposition}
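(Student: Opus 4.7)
The plan is to establish the integral balance \eqref{eq:balance_dt_muerte} by a cohort bookkeeping argument on a time window $[t,t+h]$, and then to deduce the McKendrick PDE \eqref{eq:pde_muerte} by localizing the resulting identity over the spatial interval.

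For the first step, I would decompose the inmates counted in $N_{t+h}^{(s,r]}$ into two disjoint subpopulations. The first consists of inmates already present at time $t$ whose remaining sentence at $t$ lies in $(s+h,r+h]$, so that after $h$ units of time it lies in $(s,r]$, and who have not been removed during $[t,t+h]$. Since the removal rate $d_t$ does not depend on the remaining sentence, the surviving fraction of this cohort equals $\exp(-\int_t^{t+h} d_u\,du)=1-d_t h + o(h)$ by continuity of $t\mapsto d_t$. Using that $\rho_t^{\cdot}$ is a density, one gets
\[
N_t^{(s+h,r+h]} - N_t^{(s,r]} \;=\; \int_{r}^{r+h}\rho_t^u\,du - \int_{s}^{s+h}\rho_t^u\,du \;=\; h(\rho_t^{r} - \rho_t^{s}) + o(h)
\]
at the continuity points of $\rho_t^{\cdot}$. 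The second subpopulation comprises inmates who entered at some time $\tau\in(t,t+h)$ with an initial sentence in $(s+t+h-\tau,\,r+t+h-\tau]$ and were not subsequently removed; this contributes
\[
\int_t^{t+h}\lambda_\tau\, q_\tau\bigl((s+t+h-\tau,\,r+t+h-\tau]\bigr)\,\exp\Bigl(-\!\int_\tau^{t+h}d_u\,du\Bigr)\,d\tau,
\]
which, by the continuity hypothesis \eqref{hip:medidas_cuerpo} together with the continuity of $\lambda$ and $d$, reduces to $h\,\lambda_t q_t((s,r]) + o(h)$. Adding both contributions, subtracting $N_t^{(s,r]}$, dividing by $h$, and letting $h\downarrow 0$ yields \eqref{eq:balance_dt_muerte}.

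For the second step, I would rewrite each term of \eqref{eq:balance_dt_muerte} as an integral over $(s,r]$: the left-hand side as $\int_s^r \partial_t\rho_t^u\,du$ by differentiation under the integral sign, the difference $\rho_t^r-\rho_t^s$ as $\int_s^r \partial_u\rho_t^u\,du$, the source $\lambda_t q_t((s,r])$ as $\lambda_t\int_s^r q_t'(u)\,du$ using the generalized density of Assumption \ref{ass:measures}, and $d_t N_t^{(s,r]}$ as $d_t\int_s^r\rho_t^u\,du$. Because the resulting identity of integrals over $(s,r]$ holds for arbitrary $0\le s<r$, the PDE \eqref{eq:pde_muerte} follows for a.e.\ $(t,r)$.

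The main obstacle is making the limit in the entering-cohort integral rigorous: as $\tau$ varies in $[t,t+h]$, the intervals of integration against $q_\tau$ translate at unit speed, so one needs control of $q_\tau(A)$ uniformly in simultaneous small perturbations of $\tau$ and of $A$. This is precisely what the form \eqref{hip:medidas_cuerpo} of the continuity hypothesis is designed to supply. A related subtlety is that once $q_t$ carries atoms, the generalized density $q_t'$ and hence $\partial_r\rho_t^r$ must be interpreted distributionally, so \eqref{eq:pde_muerte} should be read in the weak sense at those remaining times where a Dirac mass is injected into the system.
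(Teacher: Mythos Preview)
Your proposal is correct and follows essentially the same cohort-bookkeeping argument as the paper: derive the integral balance \eqref{eq:balance_dt_muerte} from a finite-difference mass balance over $[t,t+h]$ and then localize in $r$ to obtain the PDE. The only cosmetic difference is that the paper keeps the removal as a separate subtracted integral $\int_0^{\Delta t} d_{t+\tau}N_{t+\tau}^{(\cdot)}\,d\tau$ (and then applies an appendix lemma---exactly the uniform continuity hypothesis \eqref{hip:medidas_cuerpo}---to both the entry and the removal integrals), whereas you absorb removal into the survival factor $\exp(-\int_t^{t+h}d_u\,du)$; to first order in $h$ these are equivalent, and for the localization step the paper divides \eqref{eq:balance_dt_muerte} by $r-s$ and sends $s\nearrow r$ via Lebesgue differentiation rather than writing $\rho_t^r-\rho_t^s=\int_s^r\partial_u\rho_t^u\,du$, which sidesteps the regularity issue you correctly flag at the end.
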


\begin{proof}
We know that
\begin{equation*}
E_t^I = \int_I\lambda_tq_t(dr) = \lambda_tq_t(I)\quad ,\quad D_t^I = d_t\int_I \rho_t^{r}dr = d_t N_t^I.
\end{equation*}

Consider an interval $I=(s,r]$ of remaining sentence times and a time step $\Delta t$. Mass balance analysis yields
\begin{equation}\label{eq:balance_conteo_muerte}
N_{t+\Delta t}^{(s,r]} \,=\, N_{t}^{(s+\Delta t,r+\Delta t]} + \int_0^{\Delta t} E_{t+\tau}^{(s+\Delta t-\tau, r+\Delta t-\tau]}d\tau - \int_0^{\Delta t} D_{t+\tau}^{(s+\Delta t-\tau, r+\Delta t-\tau]}d\tau.
\end{equation}

We note that for $I_1,I_2\in\mathcal B(\C)$ disjoint sets, $N_t^{I_1}+N_t^{I_1}=N_t^{I_1\cup I_2}$. Thus,
\begin{equation*}\label{eq:conteo_ints_disjuntos}
N_{t}^{(s+\Delta t,r+\Delta t]}= N_{t}^{(s,r]}+N_{t}^{(r,r+\Delta t]}-N_{t}^{(s,s + \Delta t]}.
\end{equation*}

Then, \eqref{eq:balance_conteo_muerte} becomes the mass balance equation on the interval $(s,r]$:
\begin{equation}\label{eq:balance_masa_muerte}
\begin{split}
N_{t+\Delta t}^{(s,r]} - N_{t}^{(s,r]} = &\, N_{t}^{(r,r+\Delta t]}-N_{t}^{(s,s+\Delta t]}  + \int_0^{\Delta t} \lambda_{t+\tau} q_{t+\tau}((s+\Delta t-\tau, r+\Delta t-\tau])d\tau\\
&\, - \int_0^{\Delta t} d_{t+\tau}N_{t+\tau}^{(s+\Delta t-\tau,r+\Delta t-\tau]}  d\tau.
\end{split}
\end{equation}
Note that for all $t\geq0$ fixed, $\varphi_t(\cdot)=\lambda_tq_t(\cdot)$ and $\phi_t(\cdot)=d_t N_t^{\mathbf{\cdot}}$ define positive measures that, under the hypotheses of the proposition, satisfy hypothesis \eqref{hip:medidas_cuerpo}. Thus, by Proposition \ref{prop:Apendice} in the Appendix, dividing \eqref{eq:balance_masa_muerte} by $\Delta t$ and taking limits as $\Delta t\searrow0$, we obtain balance equation \eqref{eq:balance_dt_muerte}.

Now, under Assumption \ref{ass:measures}, $q_t(dr)$ has a density $q'_{t}(r)$, that is, for all $I\in\mathcal B(\mathcal C)$, $q(I)=\int_I q'(r)dr$. Since $N_t^{(s,r]}=\int_{(s,r]}\rho_t^{\alpha}d\alpha$, \eqref{eq:balance_dt_muerte} can be rewritten as
\begin{equation}\label{eq:balance_dt_muerte_2}
\int_{(s,r]}\left[\, \frac{\partial \rho_t^{\alpha}}{\partial t} + d_t\rho_t^{\alpha} \, \right] d\alpha \,=\, \rho_t^{r} - \rho_t^{s} + \lambda_{t} \int_{(s,r]}q'_{t}(\alpha)d\alpha,\quad 0\leq s< r.
\end{equation}
Dividing \eqref{eq:balance_dt_muerte_2} by $r-s$ and taking limits as $s\nearrow r$, by \cite[Theorem 5.4.2]{Bog2007}, we obtain the transport equation \eqref{eq:pde_muerte}.
\end{proof}

\begin{remark}
The McKendrick equation appears naturally in population models with an age structure \cite{BrauerCastillo:2012}. Our prison population model is in some sense an age-structured model where, instead of accounting for age, we consider the remaining life of an individual, given a distribution of life span. One of the differences between e\-qua\-tion \eqref{eq:pde_muerte} and that arising in age-structured population models is the nonhomogeneous term $\lambda_{t} q'_{t}(r)$, which appears because the entry of new inmates is not proportional to the current population, contrary to typical age-structured models.
\end{remark}

\subsubsection{Steady-state analysis}\label{subsec:eq_pop}

In this section, we study the behavior of the system in the long term, assuming that $\lambda,d,q(\cdot)$ does not depend on $t$. Regarding the solution of \eqref{eq:pde_muerte} under this assumption, we have the following proposition:
\begin{proposition}\label{prop:param_indep_t}
Suppose that $\lambda,d,q(\cdot)$ does not depend on $t$ and that Assumption \ref{ass:measures} holds with $T_q<\infty$ being the mean value of $q(\cdot)$ and $\mathcal D$ being the support of the discrete part of $q(\cdot)$. If the initial density $(\rho_0^r)_{r\geq0}$ defines a finite measure consisting of a mixture of a continuous measure and a discrete measure with finite support, then:
\begin{enumerate}
\item For all $t$, the solution $(\rho_t^r)_{r\geq0}$ of equation \eqref{eq:pde_muerte} has finite mass. Moreover, for all $t\geq0$, $\rho_t^r$ converges to 0 as $r$ tends to infinity in a set whose complement has a null Lebesgue measure.
\item The population density satisfies \begin{equation}\label{eq:lim_rho_t}
\lim_{t\rightarrow\infty}\rho_t^r \,=\, \lambda \int_{(r,\infty)}e^{-d(\alpha-r)}q(d\alpha),\quad\mbox{ a.e. } r\geq0.
\end{equation}
Moreover, the previous formula holds for all but finitely many $r\geq0$.
\item For all $I\in\mathcal B(\R_{+})$, we have
\begin{equation}\label{eq:N_tI_lim}
\lim_{t\rightarrow\infty} N_t^{I}\,=\,\lambda \int_I \int_{(r,\infty)}e^{-d(\alpha-r)}q(d\alpha) dr.
\end{equation}
\end{enumerate}
\end{proposition}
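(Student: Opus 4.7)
My strategy is to produce an explicit representation of $\rho_t^r$ by integrating the time-independent PDE \eqref{eq:pde_muerte} along characteristics, and then to read off the three conclusions from this formula using Fubini and the monotone and dominated convergence theorems.

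\textbf{Step 1 (characteristics).} Rewriting \eqref{eq:pde_muerte} as $(\partial_t-\partial_r)\rho_t^r=-d\rho_t^r+\lambda q'(r)$, the characteristics of the operator $\partial_t-\partial_r$ are the lines $t+r=\mathrm{const}$. Along the one that passes through $(t,r)$ and meets the initial slice at $(0,r+t)$, the PDE becomes the linear ODE $\tilde\rho'(s)=-d\,\tilde\rho(s)+\lambda q'(r+t-s)$ for $\tilde\rho(s):=\rho_s^{r+t-s}$. Integrating from $s=0$ to $s=t$ and changing variables $\alpha=r+t-s$ yields, under Assumption~\ref{ass:measures}, the representation
\begin{equation*}
\rho_t^r \;=\; e^{-dt}\rho_0^{r+t}\;+\;\lambda\int_{(r,\,r+t]}e^{-d(\alpha-r)}\,q(d\alpha),
\end{equation*}
which is unambiguous once the atomic parts of $\rho_0$ and $q$ are separated out. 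Because the flow of characteristics at $r=0$ exits the domain (one has $dr/dt=-1$), no boundary condition is needed there. I would justify this representation by plugging it into the integrated balance equation \eqref{eq:balance_dt_muerte} rather than differentiating distributions.

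\textbf{Step 2 (the three conclusions).} The three claims follow from the above formula by routine estimates. For the finite-mass part of claim 1, Fubini gives
\begin{equation*}
\int_{\R_+}\rho_t^r\,dr \;\leq\; e^{-dt}\,M_0 \;+\; \lambda\int_{(0,\infty)}\frac{1-e^{-d\min(\alpha,t)}}{d}\,q(d\alpha),
\end{equation*}
where $M_0$ is the total mass of $\rho_0$; both terms are finite (the second is bounded by $\lambda/d$ for $d>0$, and by $\lambda T_q$ for $d=0$). The decay of $\rho_t^r$ as $r\to\infty$ off a null set then follows because the second term is dominated by $\lambda q((r,\infty))\to 0$, and the first, outside the finite shifted support of the atomic part of $\rho_0$, is $e^{-dt}$ times an integrable tail of $\rho_0$. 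For claim 2, fix $r$ outside the (finite) union of the atoms of $q$ and of the atoms of $\rho_0$ translated by $-t$, and let $t\to\infty$: the first term decays to $0$ (exponentially when $d>0$; by the tail integrability of $\rho_0$ when $d=0$), while the second converges to $\lambda\int_{(r,\infty)}e^{-d(\alpha-r)}q(d\alpha)$ by monotone convergence on the expanding set $(r,r+t]$. For claim 3, integrate the representation over $I\in\mathcal B(\R_+)$: the first term is bounded by $e^{-dt}\int_{I+t}\rho_0^s\,ds\to 0$, and the second converges upward to $\lambda\int_I\int_{(r,\infty)}e^{-d(\alpha-r)}q(d\alpha)\,dr$ by another application of monotone convergence, justified by the finite upper bound already obtained.

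\textbf{Main obstacle.} The delicate point is the careful bookkeeping around the Dirac components of $q$ and $\rho_0$, since the method of characteristics translates these atoms in time, producing precisely the finite exceptional set in $r$ that forces the ``a.e.'' and ``all but finitely many'' qualifications in the statement. I would therefore first prove the three claims for purely absolutely continuous data, where Fubini and the convergence theorems apply transparently, and then reintroduce the atoms as additive corrections that only affect the exceptional set.
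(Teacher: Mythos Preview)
Your proposal is correct and follows essentially the same route as the paper: solve \eqref{eq:pde_muerte} explicitly by characteristics to obtain $\rho_t^r=e^{-dt}\rho_0^{r+t}+\lambda\int_{(r,r+t]}e^{-d(\alpha-r)}q(d\alpha)$, then read off the three claims from this formula, with the atomic parts of $q$ and $\rho_0$ accounting for the finite exceptional set in $r$. The only cosmetic differences are that the paper bounds the source term by $\lambda q(\{r\})+\lambda q((r,\infty))$ and invokes $T_q=\int_0^\infty q((r,\infty))\,dr$ directly for the mass estimate (rather than your Fubini computation), and it uses dominated convergence for claim~3 where you use monotone convergence; both choices are valid.
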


\begin{proof}
Equation \eqref{eq:pde_muerte} can be explicitly solved via the method of characteristics. 
Integration along the characteristics leads to the possible solutions
\begin{equation}\label{eq:rho_t_r}
\begin{split}
\rho_t^{r^-} := \rho_0^{r+t}e^{-dt} + \lambda\int_{[r,r+t]}e^{-d(\alpha-r)}q(d\alpha),\\[2mm]~ \rho_t^{r^+} := \rho_0^{r+t}e^{-dt} + \lambda\int_{(r,r+t]}e^{-d(\alpha-r)}q(d\alpha).
\end{split}
\end{equation}
Note that the set of points $r\in\mathcal C$ for which $\rho_t^{r^-} \neq \rho_t^{r^+}$ is at most finite. Indeed, it coincides with $\mathcal D$. Thus, the solution $\rho_t^r$ of \eqref{eq:pde_muerte} satisfies, for all $t\geq0$, $\rho_t^r=\rho_t^{r^-}=\rho_t^{r^+}$, a.e. $r>0$ (indeed, for all $r\notin\mathcal D$). Thus,
\begin{equation}\label{eq:bound_rho_t_r}
\begin{split}
\rho_t^{r} \,\leq\, \rho_t^{r^-} \leq&~ \rho_0^{r+t}e^{-dt} + \lambda\left(q(\{r\})+\int_{(r,\infty)}e^{-d(\alpha-r)}q(d\alpha)\right)\\
\leq&~ \rho_0^{r+t}e^{-dt} + \lambda q(\{r\}) + \lambda q((r,\infty)),
\end{split}
\end{equation}
where $q(\{r\})>0$ only for $r\in\mathcal D$, which is a set with a null Lebesgue measure. The right-hand-side expression in \eqref{eq:bound_rho_t_r} is Lebesgue-integrable with respect to $r$.

\begin{enumerate}
\item Integrating \eqref{eq:bound_rho_t_r} with respect to $r$, on the set $(0,\infty)$, we obtain
\begin{equation*}
N_t \leq N_0^{(t,\infty)}e^{-dt} + \lambda \int_0^{\infty} q((r,\infty)) dr.
\end{equation*}
According to Remark \ref{rem:Tq}, $T_q = \int_0^{\infty} q( (r,\infty) ) dr$, which is finite by hypothesis. This, combined with the fact that $N_0^{(t,\infty)}\leq N_0<\infty$, proves the finiteness of the total mass $N_t$, $t\geq0$. 

Since $N_t=\int_{0}^{\infty}\rho_t^rdr$ is finite, then $r\mapsto\rho_t^r$ is an integrable function. Thus, $\rho_t^r$ converges to 0 as $r\nearrow\infty$ in a set whose complement has a null Lebesgue measure.

\item From \eqref{eq:rho_t_r}, we note that for $t$ large enough, the points with an initial positive measure, with respect to $(\rho_0^{\alpha})_{\alpha\geq0}$, are no longer involved in the expressions for $\rho_t^{r^-}$ and $\rho_t^{r^+}$. This, along with the fact that the support of the discrete part of $q(\cdot)$ is finite, implies that for $t$ large enough, $\rho_t^{r^+}=\rho_t^{r^-}$ except at a finite set, which implies that $r\mapsto \rho_t^{r}$ is a function continuous by parts, with at most finitely many discontinuities.

Formula \eqref{eq:lim_rho_t} is directly obtained from \eqref{eq:rho_t_r}.

\item From \eqref{eq:lim_rho_t}, we have the $r-$a.e. pointwise convergence
\begin{equation}\label{eq:densidad_pre_TCD}
\lim_{t\rightarrow\infty}\rho_t^r\mathbbm{1}_{I}(r)=\lambda \int_{(r,\infty)}e^{-d(\alpha-r)}q(d\alpha)\mathbbm{1}_{I}(r),
\end{equation}
where $\mathbbm{1}_{I}(r)$ denotes the indicator function of the set $I$, which means that $\mathbbm{1}_{I}(r)=1$ if $r\in I$, and $\mathbbm{1}_{I}(r)=0$ otherwise. Moreover, from \eqref{eq:bound_rho_t_r}, we obtain for $t_0$ large enough (fixed), the bound
\begin{equation*}
\rho_t^r\mathbbm{1}_{I}(r)\leq \left(\rho_0^{r+t_0} + \lambda q((r,\infty))\right)\mathbbm{1}_{I}(r),
\end{equation*}
which is an integrable function. Integrating \eqref{eq:densidad_pre_TCD}, by the Lebesgue dominated convergence theorem \cite[Theorem 2.8.1]{Bog2007}, we conclude that \eqref{eq:N_tI_lim}.
\end{enumerate}
\end{proof}

\begin{remark}
For the time-dependent case, a similar result to point 1. of Proposition \ref{prop:param_indep_t} can be proved, if the initial density $(\rho_0^r)_{r\geq0}$ satisfies the hypotheses of Proposition \ref{prop:param_indep_t}, $(\lambda_t)_{t\geq0}$ is bounded, and the family $(q_t(\cdot))_{t\geq0}$ is uniformly tight \cite{cohn2013} (or does not depend on $t$), with every $q_t(\cdot)$ satisfying Assumption \ref{ass:measures}, under the hypotheses of Proposition \ref{prop:mckendrick}. Then, if in \eqref{eq:balance_dt_muerte} we replace the particular value $s=0$ and take limits as $r\nearrow\infty$, we obtain the ODE associated with the total prison population with positive remaining time $N_t := N_t^{(0,\infty)}$:
\begin{equation}\label{eq:balance_poblacion}
\frac{d}{dt} N_{t} = \lambda_{t} - \rho_t^{0} - d_t N_t, \quad \mbox{ a.e. } t\geq0,
\end{equation}
with $\rho_t^r$ being the solution of \eqref{eq:pde_muerte}. The quantity $\rho^0_t$ corresponds to the exit rate of the prison (the number of people per unit of time), which is interpreted as the rate of inmates that have effectively completed their sentences at instant $t$.
\end{remark}

Proposition \ref{prop:param_indep_t} shows that the density $\rho_t^r$ converges as $t\rightarrow\infty$ for all but finitely many $r\geq0$. Assume now that the parameters $\lambda,d,q(\cdot)$ are time independent and that the system is in a stationary regime. Denote the stationary density $\rho^r$ and the stationary population with remaining sentence time in the interval $I\in\mathcal B(\mathcal C)$ by $N^{I}$. The mass-balance equation \eqref{eq:balance_dt_muerte} in equilibrium states that $\rho^r$ is the solution of the stationary equation
\begin{equation}\label{eq:rhos_muerte_001}
\rho^{r} - \rho^{s} + \lambda q((s,r]) \,=\, d N^{(s,r]}, \quad \mbox{ for all } 0\leq s< r.
\end{equation}

Let us define $N:=N^{(0,\infty)} = \int_0^{\infty}\rho^{\alpha}d\alpha$ as the total number of prisoners in equilibrium. In what follows, for a function $f:[0,\infty)\rightarrow\R$, we denote its Laplace transform by $\mathcal L(f)(\xi)=\int_0^{\infty}e^{-\xi x}f(x)dx$, while for a signed measure $\nu:\mathcal B([0,\infty))\rightarrow\R\cup\{+\infty\}$, we denote its Laplace transform by $\hat \nu(\xi)=\int_{[0, \infty)}e^{-\xi x}\nu(dx)$.

In the following proposition, we show that the limiting density \eqref{eq:lim_rho_t} obtained in point 2 of Proposition \ref{prop:param_indep_t} coincides with the unique integrable solution of \eqref{eq:rhos_muerte_001}.

\begin{proposition}\label{prop:densidad_limite}
Suppose that Assumption \ref{ass:measures} holds. Let $\rho^r$ be a solution of \eqref{eq:rhos_muerte_001} with finite mass, that is, such that $N=\int_0^{\infty}\rho^rdr<\infty$. Then,
\begin{equation}\label{eq:comp_densidades_rho_limit}
\rho^r  = \lambda \int_{(r,\infty)}e^{-d(\alpha-r)}q(d\alpha), \quad \mbox{ a.e. } r\geq0.
\end{equation}
That is, $\rho^r$ corresponds to the limit of the solution $\rho_t^r$ of \eqref{eq:pde_muerte} starting from an initial condition $(\rho_0^r)_{r\geq0}$ with finite mass.
\end{proposition}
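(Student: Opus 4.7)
The plan is to introduce the explicit candidate $\phi^r := \lambda\int_{(r,\infty)} e^{-d(\alpha-r)} q(d\alpha)$, show that it is an integrable solution of \eqref{eq:rhos_muerte_001}, and then invoke uniqueness of integrable solutions to conclude $\rho^r = \phi^r$ almost everywhere. Integrability of $\phi$ is obtained via Fubini applied to $q$: a short computation yields $\int_0^\infty \phi^r\, dr = \frac{\lambda}{d}(1 - \hat q(d))$, which is finite (and equals $\lambda T_q$ in the limiting case $d=0$, again finite by Assumption~\ref{ass:measures}).

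To verify that $\phi$ satisfies \eqref{eq:rhos_muerte_001} for $0 \leq s < r$, I would carry out a direct Fubini computation for $d\int_s^r \phi^\alpha\, d\alpha$, splitting the $\beta$-integration according to whether $\beta\in(s,r]$ or $\beta\in(r,\infty)$ and evaluating the inner $\alpha$-integral explicitly. This gives
\begin{equation*}
d\int_s^r \phi^\alpha\, d\alpha \;=\; \lambda\int_{(s,r]}(1 - e^{-d(\beta-s)})\, q(d\beta) + \lambda\int_{(r,\infty)}(e^{-d(\beta-r)} - e^{-d(\beta-s)})\, q(d\beta).
\end{equation*}
The same expression drops out of $\phi^r - \phi^s + \lambda q((s,r])$ by splitting $\phi^s$ over $(s,\infty) = (s,r] \cup (r,\infty)$ and recombining terms, so $\phi$ is indeed a solution.

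For uniqueness, set $\psi := \rho - \phi \in L^1(0,\infty)$. Subtracting the two instances of \eqref{eq:rhos_muerte_001} cancels the $\lambda q((s,r])$ term, so $\psi$ satisfies the homogeneous Volterra equation $\psi^r - \psi^s = d\int_s^r \psi^\alpha\, d\alpha$ for all $0\leq s < r$. Setting $s = 0$ yields $\psi^r = \psi^0 + d\int_0^r \psi^\alpha\, d\alpha$, which forces $\psi$ to be absolutely continuous with $(\psi^r)' = d\psi^r$ a.e.; hence $\psi^r = \psi^0\, e^{dr}$. Since $e^{dr}$ is not in $L^1(0,\infty)$ for $d \geq 0$, the integrability of $\psi$ demands $\psi^0 = 0$, so $\psi \equiv 0$ and \eqref{eq:comp_densidades_rho_limit} follows.

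The principal technical obstacle is the Fubini bookkeeping in the second step, since $q$ carries both an absolutely continuous and a finite discrete part and the $\lambda q((s,r])$ offset must be recovered precisely from the region swap. An alternative that avoids this, and is suggested by the Laplace transform notation introduced just before the statement, is to apply $\mathcal L$ to the $s=0$ form of \eqref{eq:rhos_muerte_001}: using $\mathcal L(\int_0^{\cdot} \rho^\alpha\, d\alpha)(\xi) = \mathcal L(\rho)(\xi)/\xi$ and $\mathcal L(q((0,\cdot]))(\xi) = \hat q(\xi)/\xi$ produces $(\xi - d)\, \mathcal L(\rho)(\xi) = \rho^0 - \lambda \hat q(\xi)$; finiteness of $\mathcal L(\rho)(d)$ (guaranteed by $\rho \in L^1$) pins down $\rho^0 = \lambda \hat q(d)$ and identifies $\mathcal L(\rho) = \mathcal L(\phi)$, after which Lerch's uniqueness theorem yields $\rho = \phi$ a.e.
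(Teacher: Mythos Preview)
Your main argument is correct and takes a genuinely different route from the paper. The paper applies the Laplace transform to the $s=0$ form of \eqref{eq:rhos_muerte_001}, inverts to obtain $\rho^r = \rho^0 e^{dr} - \lambda Q(r) - \lambda d\,(Q\ast e^{d\cdot})(r)$, simplifies the convolution through a Lebesgue--Stieltjes integration by parts, and only then pins down $\rho^0$ by dividing by $e^{dr}$ and sending $r\to\infty$. Your approach bypasses all of that machinery: you verify directly via Fubini that the candidate $\phi$ solves \eqref{eq:rhos_muerte_001}, and then dispose of uniqueness with the elementary observation that the homogeneous equation forces $\psi^r=\psi^0 e^{dr}$, which cannot lie in $L^1$ unless $\psi^0=0$. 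This is cleaner and more self-contained; the only cost is that it presupposes the candidate formula, but that is already supplied by Proposition~\ref{prop:param_indep_t}. Your Laplace alternative is close in spirit to the paper's proof but with a tidier endgame: evaluating $(\xi-d)\mathcal L(\rho)(\xi)=\rho^0-\lambda\hat q(\xi)$ at $\xi=d$ and using $\mathcal L(\rho)(d)\le N<\infty$ to read off $\rho^0=\lambda\hat q(d)$ avoids the inversion--then--limit detour the paper takes. Note that either route recovers $\rho^0=\lambda\hat q(d)$ (in yours, as $\phi^0$), which the paper records as \eqref{eq:rho0_muerte} and uses downstream.
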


\begin{proof}
Suppose that $\rho^r$ is a density with finite mass, solution of \eqref{eq:rhos_muerte_001}. Thus, $r\mapsto\rho^r$ is an integrable function, and $\rho^r$ converges to 0 in a set whose complement has a null Lebesgue measure.

Define $Q(r)=q((0,r])$, $r\geq0$. Evaluating $s=0$ and applying the Laplace transform with respect to the variable $r$ in \eqref{eq:rhos_muerte_001}, with $N^{(0,r]}=\int_{(0,r]}\rho^{\alpha}d\alpha$, we obtain
\begin{equation*}
\begin{split}
\mathcal L (\rho)(\xi) =&\, \frac{\rho^0}{\xi-d}-\lambda\mathcal L(Q)(\xi) - \lambda d \frac{\mathcal L(Q)(\xi)}{\xi-d}\\
=&\, \rho^0\mathcal L(e^{dr})(\xi)-\lambda\mathcal L (Q)(\xi)-\lambda d \mathcal L (Q)(\xi)\cdot \mathcal L(e^{dr})(\xi),
\end{split}
\end{equation*}
which implies, applying inverse Laplace transform, that
\begin{equation}\label{eq:rho_r_pre}
\rho^r = \rho^0e^{dr}-\lambda Q(r)-\lambda d Q\ast e^{dr}(r),
\end{equation}
where the operator $\ast$ stands for convolution. By definition,
\begin{equation}\label{eq_aux01}
(Q\ast e^{dr})(r) = \int_0^r Q(\alpha)e^{d(r-\alpha)}d\alpha = e^{dr}\int_0^r Q(\alpha)e^{-d\alpha}d\alpha.
\end{equation}

Note that $Q(\cdot)$ is a right-continuous function with left limits at every point of the interval $[0,\infty)$ with at most a finite number of discontinuity points (under Assumption \ref{ass:measures}). Thus, $Q(\cdot)$ defines a Lebesgue-Stieltjes measure $\mu_Q(\cdot)$, which coincides with $q(\cdot)$ on $\mathcal B([0,\infty))$. Thus, integrating by parts \cite[Proposition 5.3.3]{cohn2013}, for $f$ continuously differentiable,
\begin{equation}\label{eq:integracion_por_partes}
\int_{[0,r]}f(\alpha)\mu_Q(d\alpha) = Q(r)f(r)-Q(0)f(0)-\int_0^r Q(\alpha)f'(\alpha)d\alpha.
\end{equation}

Taking $f'(\alpha)=e^{-d\alpha}$ in \eqref{eq:integracion_por_partes} and noting that $Q(0)=q((0,0])=0$, we obtain
\begin{equation}\label{eq_aux02}
\int_0^rQ(\alpha)e^{-d\alpha}d\alpha = \frac{1}{d}\left( -Q(r)e^{-dr} + \int_{[0,r]}e^{-d\alpha}q(d\alpha) \right).
\end{equation}

Thus, combining \eqref{eq_aux01}, \eqref{eq_aux02} and \eqref{eq:rho_r_pre}, we obtain
\begin{equation}\label{eq:densidad_muerte_pre}
\rho^r \,=\, \rho^0e^{d r} - \lambda e^{d r}\int_{[0,r]}e^{-d\alpha}q(d\alpha).
\end{equation}

Dividing by $e^{d r}$ and taking limits as $r\nearrow\infty$ in \eqref{eq:densidad_muerte_pre}, we obtain
\begin{equation}\label{eq:rho0_muerte}
\rho^0 \,=\, \lambda\int_0^{\infty}e^{-d\alpha}q(d\alpha) \,=\, \lambda\hat q(d).
\end{equation}

Replacing \eqref{eq:rho0_muerte} in \eqref{eq:densidad_muerte_pre}, we obtain \eqref{eq:comp_densidades_rho_limit}.
\end{proof}

Let us define the relative entry and exit rates $\mu^{{\rm in}}$ and $\mu^{{\rm out}}$ as the entry and exit rates relative to the total prison population in equilibrium $N = \int_0^{\infty}\rho^{\alpha}d\alpha$:
\begin{equation}\label{eq:def_mus}
\mu^{{\rm in}}:=\frac{\lambda}{N}, \qquad \mu^{{\rm out}}:=\frac{\rho^0}{N},
\end{equation}
with units of time$^{-1}$. Recall that, under Assumption \ref{ass:measures}, the mean value of the probability measure $q(\cdot)$, denoted by $T_q$, is finite. We refer to $T_q$ as the mean initial sentence length.

In the remainder of this section, we obtain explicit formulas for the quantities of interest at steady state. Proposition \ref{prop:densidad_limite} allows us to characterize these quantities of interest in terms of the stationary density $\rho^r$ from the stationary equation \eqref{eq:rhos_muerte_001}.

\begin{proposition}\label{prop:N_mus_eq}
$\,$
\begin{enumerate}
\item The relative entry and exit rates $\mu^{{\rm in}}, \mu^{{\rm out}}$, defined in \eqref{eq:def_mus}, are linked by
\begin{equation}\label{eq:muin_muout_delta}
\mu^{{\rm out}} = \mu^{{\rm in}} - d.
\end{equation}
\item Suppose that $d>0$. The total number of prisoners in equilibrium is
\begin{equation}\label{eq:N_qhat}
N = \lambda\frac{1-\hat q(d)}{d}.
\end{equation}
\item Suppose that $d>0$. The relative rates $\mu^{{\rm in}}, \mu^{{\rm out}}$ have the explicit formulas
\begin{equation}\label{eq:muin_muout_death}
\mu^{{\rm in}} = \frac{d}{1-\hat q(d)} \quad,\quad \mu^{{\rm out}} = \frac{d\hat q(d)}{1-\hat q(d)} = \mu^{\rm in}\hat q(d).
\end{equation}
\end{enumerate}
\end{proposition}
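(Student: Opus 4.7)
The plan is to derive all three claims by combining the stationary balance equation \eqref{eq:rhos_muerte_001} with the explicit density formula \eqref{eq:comp_densidades_rho_limit} from Proposition \ref{prop:densidad_limite}.

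For part 1, I would take $s=0$ in \eqref{eq:rhos_muerte_001} and let $r\nearrow\infty$. On the left, $\rho^r - \rho^0 + \lambda q((0,r])$ tends to $-\rho^0 + \lambda$ because $\rho^r\to 0$ (Proposition \ref{prop:param_indep_t}, point 1, carried over to the stationary density, together with integrability of $\rho^{\cdot}$) and $q((0,\infty))=1$ since $q$ is a probability measure. On the right, $d N^{(0,r]}\to dN$. Dividing the resulting identity $\lambda - \rho^0 = dN$ by $N$ and using the definitions in \eqref{eq:def_mus} immediately yields $\mu^{\rm in} - \mu^{\rm out} = d$.

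For part 2, I would integrate the explicit density \eqref{eq:comp_densidades_rho_limit} over $r\in(0,\infty)$ and swap the order of integration by Fubini (the integrand is nonnegative, so this is unconditional):
\begin{equation*}
N = \lambda \int_0^{\infty}\!\int_{(r,\infty)} e^{-d(\alpha-r)}\, q(d\alpha)\, dr
 = \lambda \int_{(0,\infty)}\!\int_0^{\alpha} e^{-d(\alpha-r)}\, dr\, q(d\alpha).
\end{equation*}
Evaluating the inner integral gives $(1-e^{-d\alpha})/d$, and integrating against $q$ produces $(1-\hat q(d))/d$, yielding \eqref{eq:N_qhat}. As a consistency check, one may alternatively recover $N$ directly from $\lambda - \rho^0 = dN$ by substituting $\rho^0=\lambda\hat q(d)$, which is exactly formula \eqref{eq:rho0_muerte} established inside the proof of Proposition \ref{prop:densidad_limite}.

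For part 3, the formulas follow by pure algebra: $\mu^{\rm in} = \lambda/N = d/(1-\hat q(d))$ by inserting \eqref{eq:N_qhat}, and then $\mu^{\rm out} = \mu^{\rm in} - d = d\hat q(d)/(1-\hat q(d)) = \mu^{\rm in}\hat q(d)$ via part 1. The only subtle point anywhere in the argument is the passage $\rho^r\to 0$ as $r\to\infty$ used in part 1; I would justify it by invoking the integrability of $\rho^{\cdot}$ together with the fact, coming from Proposition \ref{prop:densidad_limite}, that the stationary $\rho^r$ is the pointwise limit in \eqref{eq:lim_rho_t}, whose tail $\lambda \int_{(r,\infty)}e^{-d(\alpha-r)}q(d\alpha) \le \lambda q((r,\infty))$ vanishes as $r\nearrow\infty$ since $q$ is a probability measure. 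No obstacle beyond this bookkeeping is anticipated.
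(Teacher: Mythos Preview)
Your proposal is correct and follows essentially the same route as the paper for parts~1 and~3: the paper also sets $s=0$ in \eqref{eq:rhos_muerte_001}, sends $r\nearrow\infty$ to obtain $\rho^0=\lambda-dN$, and then divides by $N$; part~3 is the same algebra in both.

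The only genuine difference is in part~2. The paper does not integrate the density directly; instead it combines the balance identity $\rho^0=\lambda-dN$ with the already-established value $\rho^0=\lambda\hat q(d)$ from \eqref{eq:rho0_muerte} and solves for $N$. Your Fubini computation is a perfectly valid alternative (and you even cite the paper's method as a consistency check), but it is slightly more work than simply reusing $\rho^0=\lambda\hat q(d)$. On the other hand, your explicit tail bound $\rho^r\le\lambda q((r,\infty))\to 0$ is cleaner than what the paper invokes for the limit $\rho^r\to 0$.
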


\begin{proof}
In \eqref{eq:rhos_muerte_001}, evaluating $s=0$ and taking limits as $r\nearrow\infty$, since $\lim_{r\rightarrow\infty}\rho^{r}=0$, we obtain
\begin{equation}\label{eq:rho0_muerte_N}
\rho^{0} = \lambda q((0,\infty))-d N^{(0,\infty)} = \lambda - d N.
\end{equation}

Dividing \eqref{eq:rho0_muerte_N} by $N$, for the relative rates $\mu^{{\rm in}},\mu^{{\rm out}}$ defined in \eqref{eq:def_mus}, we obtain \eqref{eq:muin_muout_delta}. Now, combining \eqref{eq:rho0_muerte_N} and \eqref{eq:rho0_muerte}, we obtain \eqref{eq:N_qhat}. Replacing \eqref{eq:N_qhat} and \eqref{eq:rho0_muerte} in the definitions of $\mu^{{\rm in}}$ and $\mu^{{\rm out}}$ given in \eqref{eq:def_mus}, we obtain \eqref{eq:muin_muout_death}.
\end{proof}

\begin{remark}
The total population at equilibrium given by \eqref{eq:N_qhat} is a key value for the design of a penitentiary system, which involves all the data of the problem: the entry rate, removal rate from the prison and the initial sentence length distribution.
\end{remark}


Let us define $R_{\rho}$ as the mean remaining sentence time inside the prison. This term can be computed from $\rho^r$ as
\begin{equation*}
R_{\rho} = \frac{1}{N}\int_0^{\infty}r\rho^r dr.
\end{equation*}

\begin{proposition}
If $d>0$, the mean remaining sentence time inside the prison $R_{\rho}$ has the expression
\begin{equation}\label{eq:R_rho}
R_{\rho} = \frac{T_q}{1-\hat q(d)}-\frac{1}{d}.
\end{equation}
\end{proposition}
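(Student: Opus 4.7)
The plan is a direct calculation: substitute the explicit form of the stationary density from Proposition \ref{prop:densidad_limite} into $R_\rho = \frac{1}{N}\int_0^\infty r\rho^r\,dr$, swap the order of integration, and identify $T_q$ and $\hat q(d)$ in the result.

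More precisely, I would start from
\begin{equation*}
\int_0^\infty r\rho^r\,dr \;=\; \lambda\int_0^\infty r\int_{(r,\infty)} e^{-d(\alpha-r)}\,q(d\alpha)\,dr.
\end{equation*}
Since the integrand is nonnegative, Tonelli's theorem justifies swapping the order of integration (here $q$ decomposes as in Assumption \ref{ass:measures}, and the continuous part together with the finite sum over $\mathcal D$ pose no measurability issue). Doing so replaces the region $\{0<r<\alpha<\infty\}$ by the iterated integral
\begin{equation*}
\lambda\int_{(0,\infty)} e^{-d\alpha}\Bigl(\int_0^\alpha r\,e^{dr}\,dr\Bigr)\,q(d\alpha).
\end{equation*}

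Next I would evaluate the inner integral by one integration by parts, obtaining
\begin{equation*}
\int_0^\alpha r\,e^{dr}\,dr \;=\; \frac{\alpha\,e^{d\alpha}}{d}-\frac{e^{d\alpha}-1}{d^2},
\end{equation*}
so that $e^{-d\alpha}\int_0^\alpha r\,e^{dr}\,dr = \frac{\alpha}{d}-\frac{1}{d^2}+\frac{e^{-d\alpha}}{d^2}$. Integrating termwise against $q(d\alpha)$ and using $\int q(d\alpha)=1$, together with $T_q=\int \alpha\,q(d\alpha)$ and $\hat q(d)=\int e^{-d\alpha}q(d\alpha)$, yields
\begin{equation*}
\int_0^\infty r\rho^r\,dr \;=\; \frac{\lambda T_q}{d}-\frac{\lambda(1-\hat q(d))}{d^2} \;=\; \frac{\lambda T_q}{d}-\frac{N}{d},
\end{equation*}
where the last equality uses formula \eqref{eq:N_qhat}. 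Finally, dividing by $N$ and invoking \eqref{eq:muin_muout_death} in the form $\lambda/N=\mu^{\rm in}=d/(1-\hat q(d))$ gives $R_\rho=T_q/(1-\hat q(d))-1/d$, as claimed.

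There is no serious obstacle here, since the computation is essentially a Fubini swap plus an elementary integral. The only points requiring care are: (i) verifying that Tonelli applies uniformly across the absolutely continuous and atomic parts of $q$ (automatic by nonnegativity), and (ii) keeping track of the fact that $q$ is concentrated on $(0,\infty)$, so $\hat q(d)$ can equally be written as an integral over $[0,\infty)$ or $(0,\infty)$ without changing its value.
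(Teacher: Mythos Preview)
Your proof is correct. The computation is clean: the Tonelli swap is justified by nonnegativity, the inner integral is evaluated correctly, and the identification of $T_q$, $\hat q(d)$, and $N$ at the end is accurate.

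Your route is genuinely different from the paper's. You plug in the explicit stationary density $\rho^r=\lambda\int_{(r,\infty)}e^{-d(\alpha-r)}q(d\alpha)$ from Proposition~\ref{prop:densidad_limite} and compute $\int_0^\infty r\rho^r\,dr$ directly by Fubini. The paper instead avoids the explicit density altogether: it goes back to the stationary balance equation \eqref{eq:rhos_muerte_001}, takes limits to obtain $\rho^r=\lambda q((r,\infty))-dN^{(r,\infty)}$, integrates both sides over $r$, and recognizes the two resulting integrals as $T_q$ and $NR_\rho$ via the tail-integral identity of Remark~\ref{rem:Tq}. This yields the relation $N=\lambda T_q-dNR_\rho$, from which \eqref{eq:R_rho} follows after substituting \eqref{eq:N_qhat}. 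Your approach is more hands-on and self-contained; the paper's is more structural, needing only the balance equation and the tail formula, not the closed form of $\rho^r$. Both are short, but the paper's argument would survive even if the explicit density were unavailable.
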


\begin{proof}
Swapping $r$ and $s$ and taking limits as $s\nearrow\infty$ in \eqref{eq:rhos_muerte_001}, we obtain
\begin{equation}\label{eq:aux001}
\rho^r = \lambda q((r,\infty))-d N^{(r,\infty)}.
\end{equation}

If we integrate \eqref{eq:aux001},
\begin{equation}\label{eq:aux002}
\int_{0}^{\infty}\rho^rdr = \lambda \int_0^{\infty}q((r,\infty))dr-d \int_0^{\infty}N^{(r,\infty)}dr.
\end{equation}

Note that $\tilde\rho^{r}=\rho^r/N$ is a probability density function on the set $\mathcal C$ corresponding to the density of the remaining time in prison. Thus, following Remark \ref{rem:Tq}, the mean sentence time inside prison $R_{\rho}$ can be computed as
\begin{equation}\label{eq:R_rho_explained}
R_{\rho}=\int_0^{\infty} \left( \int_{(r,\infty)}\tilde \rho^{\alpha}d\alpha \right) dr = \frac{1}{N}\int_0^{\infty} \left( \int_{(r,\infty)}\rho^{\alpha}d\alpha \right) dr = \frac{1}{N}\int_0^{\infty} N^{(r,\infty)} dr.
\end{equation}

Thus, from \eqref{eq:Tq} and \eqref{eq:R_rho_explained}, \eqref{eq:aux002} translates into
\begin{equation}\label{eq:relacion_N_tiempos_muerte}
N = \lambda T_q - d N R_{\rho}.
\end{equation}

Since $d>0$, replacing $N$ from \eqref{eq:N_qhat} and isolating $R_{\rho}$ in \eqref{eq:relacion_N_tiempos_muerte}, we obtain \eqref{eq:R_rho}.
\end{proof}


To obtain the values deduced in Proposition \ref{prop:N_mus_eq} when there is no removal from the prison (i.e., $d=0$), let us define the following functions:

\begin{equation}\label{eq:functions_eq}
\begin{split}
d\mapsto N=\left\{\begin{array}{lr}
\lambda\frac{1-\hat q(d)}{d},& d>0,\\[2mm]
\lambda T_q,& d=0,
\end{array}\right.
\quad
d\mapsto \mu^{{\rm in}}=\left\{\begin{array}{lr}
\frac{d}{1-\hat q(d)},& d>0,\\[2mm]
\frac{1}{T_q},& d=0,
\end{array}\right.\\[3mm]
d\mapsto \mu^{{\rm out}}=\left\{\begin{array}{lr}
\frac{d\hat q(d)}{1-\hat q(d)},& d>0,\\[2mm]
\frac{1}{T_q},& d=0.
\end{array}\right.
\end{split}
\end{equation}

\begin{proposition}\label{rem:N_tasas_nomuerte}
The functions $d\mapsto N$, $d\mapsto \mu^{{\rm in}}$, and $d\mapsto \mu^{{\rm out}}$ defined in \eqref{eq:functions_eq} are right-continuous at $d=0$.
\end{proposition}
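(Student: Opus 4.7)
The plan is to reduce all three right-continuity statements to a single limit computation, namely
\[
\lim_{d\searrow 0} \frac{1-\hat q(d)}{d} = T_q,
\]
and to the continuity of $\hat q$ at $d=0$. Since $q(\cdot)$ is a probability measure, $\hat q(0)=\int_{[0,\infty)}q(d\alpha)=1$, so the indeterminate form $0/0$ is exactly what appears in all three expressions as $d\searrow 0$.

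To establish the main limit I would write, using Fubini/linearity of the integral,
\[
\frac{1-\hat q(d)}{d} \;=\; \int_{(0,\infty)} \frac{1-e^{-d\alpha}}{d}\,q(d\alpha).
\]
For each fixed $\alpha>0$ the integrand converges pointwise to $\alpha$ as $d\searrow 0$. The elementary inequality $1-e^{-x}\le x$ for $x\ge 0$ gives the uniform bound $\frac{1-e^{-d\alpha}}{d}\le \alpha$, and $\alpha\mapsto \alpha$ is $q$-integrable since $T_q<\infty$ by Assumption \ref{ass:measures}. The dominated convergence theorem then yields $\lim_{d\searrow 0}\frac{1-\hat q(d)}{d}=\int_{(0,\infty)}\alpha\, q(d\alpha)=T_q$.

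The three claims follow immediately: for $N$, multiplying by $\lambda$ gives $\lim_{d\searrow 0} \lambda\frac{1-\hat q(d)}{d}=\lambda T_q$; for $\mu^{\rm in}$, taking the reciprocal (permissible since $T_q>0$) gives $\lim_{d\searrow 0}\frac{d}{1-\hat q(d)}=1/T_q$; for $\mu^{\rm out}$, I would observe that $\hat q(\cdot)$ is continuous at $0$ (again by dominated convergence, with dominating function $1$), so $\hat q(d)\to 1$ as $d\searrow 0$, and then $\frac{d\hat q(d)}{1-\hat q(d)}=\hat q(d)\cdot\frac{d}{1-\hat q(d)}\to 1\cdot \frac{1}{T_q}=\frac{1}{T_q}$.

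The only nontrivial step is the interchange of limit and integral, which is where the finiteness of $T_q$ (guaranteed by Assumption \ref{ass:measures}) is essential; the rest is algebra. I do not see a further obstacle: the mixed discrete/continuous nature of $q$ plays no role because the bound $\frac{1-e^{-d\alpha}}{d}\le \alpha$ is pointwise in $\alpha$ and thus applies to both the absolutely continuous and atomic parts of $q$ uniformly.
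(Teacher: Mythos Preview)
Your proof is correct and follows essentially the same route as the paper: both reduce everything to showing $\lim_{d\searrow 0}\frac{1-\hat q(d)}{d}=T_q$ (the paper phrases this as $-\hat q'(0)=T_q$), and then read off the three limits. Your version is in fact slightly tighter, since you supply the explicit dominating function $\alpha\mapsto\alpha$ for the interchange of limit and integral, whereas the paper differentiates under the integral sign without justifying that step.
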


\begin{proof}
When taking $d=0$, from \eqref{eq:relacion_N_tiempos_muerte}, we directly obtain $N=\lambda T_q$. From this equation and \eqref{eq:rho0_muerte_N}, we directly obtain $\rho^0=\lambda=N/T_q$, and dividing by $N$, we obtain $\mu^{{\rm in}} = \mu^{{\rm out}} = 1/T_q$. 

To prove the continuity of $N$, $\mu^{{\rm in}}$ and $\mu^{{\rm out}}$ at $d=0$, we note that
$\hat q(0) = \int_0^{\infty} q(dr) = q((0,\infty)) = 1$. 
%
Thus,
\begin{equation*}
\lim_{d\searrow0} \frac{1-\hat q(d)}{d} = -\lim_{d\searrow0} \frac{\hat q(d)-\hat q(0)}{d-0} = -\hat q' (0),
\end{equation*}
where
\begin{equation*}
\hat q'(d) =  \int_{0}^{\infty}\frac{\partial}{\partial d}e^{-d r}q(dr) = \int_{0}^{\infty}-re^{-d r}q(dr),
\end{equation*}
which implies
$\hat q ' (0) = \int_{0}^{\infty}-rq(dr) = -T_q$.
%
Then,
\begin{equation*}\label{eq:limite_division_qhat}
\lim_{d\searrow0} \frac{1-\hat q(d)}{d} = -\hat q ' (0) = T_q,
\end{equation*}

Note that, taking limits as $d\searrow0$ in $N,\mu^{{\rm in}},\mu^{{\rm out}}$ in \eqref{eq:functions_eq},
\begin{equation*}
N = \lambda\frac{1-\hat q(d)}{d} \rightarrow \lambda T_q,\quad \mu^{{\rm in}}=\frac{d}{1-\hat q(d)} \rightarrow \frac{1}{T_q},\quad \mu^{{\rm out}} = \mu^{\rm in}\hat q(d) \rightarrow \frac{1}{T_q},
\end{equation*}
which proves the result.
\end{proof}

\begin{remark}\label{remark:caso_limite_d_0}
As a consequence of Proposition \ref{rem:N_tasas_nomuerte}, we henceforth develop the results using the obtained formulas for the case of $d>0$, taking into consideration that the case of $d=0$ is obtained as the limiting result when $d$ converges to 0.
\end{remark}

\begin{remark}
Typically, in mathematical models of communicable diseases for penitentiary systems, the sentence length structure of inmates is ignored \cite{Beauparlant:2016,Brauer:2001,castillo2020,Legrand:2008,Witbooi:2017}, and the entry (and exit) rate of new inmates is assumed to be the inverse of the average sentencing length. In other words, the relative rate of entry $\mu^{{\rm in}}$, with respect to the total population, is taken as the inverse of the mean time of residence, which in this case corresponds to $T_q$. In Propositions \ref{prop:N_mus_eq} and \ref{rem:N_tasas_nomuerte}, we show that this is true when the removal rate is null. If the removal rate is not null, $\mu^{{\rm in}}$ is no longer a function of the mean time of residence| $T_q$ but depends on the complete distribution of sentence length $q(\cdot)$ via its Laplace transform. A high removal rate can be interpreted as a bad situation, for instance, if this rate consists mainly of deaths and transfers due to problems in the prison (overcrowding, riots), therefore, to assume that this rate is close to zero is almost to assume an ideal situation.
\end{remark}

\subsection{Population models with two classes of sentence lengths}\label{subsec:clases}

Suppose that we want to classify the inmates according to the length of their initial sentence while not affecting the homogeneous mixing among all prisoners (which is important from the epidemiological perspective to be developed later). This means the existence of a threshold $\rstar>0$, such that every sentence in the set $\C_1 := (0,\rstar]$ (resp. $\C_2 := (\rstar,\infty)$) is considered a short (resp. long) sentence length. Note that $\C = \C_1\cup C_2$ is a disjoint union. We say that a prisoner belongs to the class $i$, or his/her type is $i$, if his/her initial sentence length belongs to $\C_i$ ($i=1,2$).

To apply the results of the previous section, we note that each class has its own entry rate and its own distribution of initial sentence length. Indeed, let us define
\begin{equation}\label{eq:p_i}
p_t^i\,:=\,q_t(\C_i),\qquad i=1,2,
\end{equation}
the proportions of individuals entering the prison with a short ($i=1$) or long ($i=2$) initial sentence length. Thus, the entry rate of prisoners to class $i$ is $\lambda_t^i:=\lambda_t q_t(\C_i)=\lambda_t p_t^i$, and the distribution of initial sentence lengths of class $i$ corresponds to the distribution of initial sentence lengths, conditional on the inmate entering said class:
\begin{equation}\label{eq:q_i}
q_t^i(I) \,:=\, \frac{q_t(I\cap \C_i)}{q_t(\C_i)} \, = \,\frac{q_t(I\cap \C_i)}{p_t^i} ,\quad I\in\mathcal B(\C).
\end{equation}

Suppose that the removal rate for each class is $d^i_t\geq0$ for all $t\geq0$. In this setting, from \eqref{eq:balance_poblacion}, the total number of prisoners in each class $N^1_t,N^2_t$ follows the equations
\begin{equation}\label{eq:balance_poblacioN^i}
\frac{d}{dt} N_t^1 = \lambda_{t}^1 - \rho_t^{0,1} - d_t^1 N_t^1,\quad \frac{d}{dt} N_t^2 = \lambda_{t}^2 - \rho_t^{0,2} - d_t^2 N_t^2, \quad\mbox{ a.e. }t\geq0,
\end{equation}
where $\rho_t^{0,1}$ and $\rho_t^{0,2}$ are the exit rates of classes 1 and 2, respectively, and solve the McKendrick equation \eqref{eq:pde_muerte}, that is,
\begin{equation}\label{eq:pde_muerte_i}
\frac{\partial \rho_t^{r,1}}{\partial t} + d_t^1\rho_t^{r,1} \,  = \frac{\partial\rho_t^{r,1}}{\partial r} + \lambda_{t}^1 q_t^{1\prime}(r),\quad 
\frac{\partial \rho_t^{r,2}}{\partial t} + d_t^2\rho_t^{r,2} \,  = \frac{\partial\rho_t^{r,2}}{\partial r} + \lambda_{t}^2 q_t^{2\prime}(r), 
\end{equation}
where $q_t^{i\prime}(\cdot)$ denotes the density of the distribution $q_t^{i}(\cdot)$ at time $t$, $i=1,2$.

\begin{lemma}\label{lemma:N_Ni}
Suppose that $d^1_t=d^2_t=:d_t\geq0$ for all $t\geq0$. Then, the density of the total population $\rho_t^r$ satisfies
\begin{equation}\label{eq:pde_muerte_sum}
\frac{\partial \rho_t^{r}}{\partial t} + d_t\rho_t^{r} \,  = \frac{\partial\rho_t^{r}}{\partial r} + \lambda_{t} q'_{t}(r).
\end{equation}
Consequently, the total population $N_t$ inside the prison satisfies
\begin{equation}\label{eq:balance_poblacion_sum}
\frac{d}{dt} N_{t} = \lambda_{t} - \rho_t^{0} - d_t N_t.
\end{equation}
\end{lemma}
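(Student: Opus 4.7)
The plan is to add the two McKendrick equations in \eqref{eq:pde_muerte_i} coordinate-wise, using the common removal rate $d_t$ and the fact that $\mathcal C = \mathcal C_1 \cup \mathcal C_2$ is a disjoint decomposition. First I would observe that since the classes are defined by a partition of the remaining-sentence-time axis and every inmate belongs to exactly one class, the population densities are additive: $\rho_t^r = \rho_t^{r,1} + \rho_t^{r,2}$ almost everywhere in $r$, and correspondingly $N_t = N_t^1 + N_t^2$.

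Next I would handle the source terms. From the definition \eqref{eq:p_i} of $p_t^i$ and the conditional distribution \eqref{eq:q_i}, one has $\lambda_t^i q_t^i(I) = \lambda_t q_t(I\cap \mathcal C_i)$ for every $I\in\mathcal B(\mathcal C)$. Passing to densities (justified under Assumption \ref{ass:measures}, which is inherited by each $q_t^i$ through the partition) gives
\begin{equation*}
\lambda_t^i \, q_t^{i\prime}(r) \,=\, \lambda_t \, q'_t(r)\, \mathbbm{1}_{\mathcal C_i}(r), \qquad i=1,2,
\end{equation*}
so that summing yields
\begin{equation*}
\lambda_t^1 \, q_t^{1\prime}(r) + \lambda_t^2 \, q_t^{2\prime}(r) \,=\, \lambda_t \, q'_t(r)\bigl(\mathbbm{1}_{\mathcal C_1}(r) + \mathbbm{1}_{\mathcal C_2}(r)\bigr) \,=\, \lambda_t \, q'_t(r)
\end{equation*}
almost everywhere, because $\mathcal C_1$ and $\mathcal C_2$ are disjoint and cover $\mathcal C$.

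With these two ingredients, the proof of \eqref{eq:pde_muerte_sum} follows by adding the two equations in \eqref{eq:pde_muerte_i}, using the common coefficient $d_t^1 = d_t^2 = d_t$ on the left-hand side and the identity just derived on the right-hand side. For \eqref{eq:balance_poblacion_sum}, I would either add directly the two ODEs in \eqref{eq:balance_poblacioN^i}, noting that $\rho_t^{0} = \rho_t^{0,1} + \rho_t^{0,2}$, or integrate \eqref{eq:pde_muerte_sum} with respect to $r$ on $(0,\infty)$ as in the derivation of \eqref{eq:balance_poblacion}.

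The only subtlety—and the mildly delicate step—is to check that the additivity of densities and of source terms passes through the generalized-function framework of Assumption \ref{ass:measures}: if $q_t$ has an atom at $r_t^j\in\mathcal D_t$, that atom is assigned entirely to the class to which $r_t^j$ belongs, so the relation $\lambda_t^i q_t^{i\prime}(r) = \lambda_t q'_t(r)\mathbbm{1}_{\mathcal C_i}(r)$ should be read distributionally (with the convention introduced after Assumption \ref{ass:measures}). The equality of the two PDEs is then to be understood in the same sense as \eqref{eq:pde_muerte}, and no further work is required beyond invoking the common $d_t$ to ensure that the left-hand sides of \eqref{eq:pde_muerte_i} combine into a single expression involving only $\rho_t^r$.
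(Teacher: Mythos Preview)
Your proof is correct and follows essentially the same route as the paper: add the two McKendrick equations \eqref{eq:pde_muerte_i}, verify that the source terms combine to $\lambda_t q_t'(r)$ via the identity $p_t^1 q_t^{1\prime}(r)+p_t^2 q_t^{2\prime}(r)=q_t'(r)$ (the paper's \eqref{eq:rel_qi_q}), and then add the ODEs \eqref{eq:balance_poblacioN^i} using $\rho_t^0=\rho_t^{0,1}+\rho_t^{0,2}$. One small terminological slip worth fixing: the classes partition the \emph{initial} sentence-length axis, not the remaining-time axis (an inmate in class~2 can have remaining time below $r^\star$), but this does not affect your argument since the additivity $\rho_t^r=\rho_t^{r,1}+\rho_t^{r,2}$ follows simply from every inmate belonging to exactly one class.
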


\begin{proof}
Adding the two equations in \eqref{eq:pde_muerte_i}, we obtain
\begin{equation*}
\frac{\partial }{\partial t}(\rho_t^{r,1}+\rho_t^{r,2}) + d_t(\rho_t^{r,1}+\rho_t^{r,2}) \,  = \frac{\partial}{\partial r}(\rho_t^{r,1}+\rho_t^{r,2}) + \lambda_{t} (p_t^1q_t^{1\prime}(r)+p_t^2 q_t^{2\prime}(r)).
\end{equation*}

Note that, from \eqref{eq:p_i} and \eqref{eq:q_i},
\begin{equation}\label{eq:rel_qi_q}
p_t^1q_t^{1\prime}(r)+p_t^2 q_t^{2\prime}(r) = p_t^1\frac{q_t'(r)}{p_t^1}\indi{\mathcal C_1}+p_t^2\frac{q_t'(r)}{p_t^2}\indi{\mathcal C_2} = q_t'(r).
\end{equation}
Since $\rho_t^r=\rho_t^{r,1}+\rho_t^{r,2}$ is the density of the total prison population, we conclude \eqref{eq:pde_muerte_sum}.\\

Now, adding the two equations in \eqref{eq:balance_poblacioN^i}, we obtain
\begin{equation*}
\frac{d}{dt} (N_t^1+N_t^2) = \lambda_{t}(p_t^1+p_t^2) - (\rho_t^{0,1}+\rho_t^{0,2}) - d_t (N_t^1+N_t^2).
\end{equation*}
Since $N=N^1+N^2$ is the total prison population and in the previous computations we obtained that $\rho^{0}=\rho^{0,1}+\rho^{0,2}$, we conclude \eqref{eq:balance_poblacion_sum}.
\end{proof}

Let us define $\pi_t^i:=N_t^i/N_t$ as the proportions of inmates of each class $i=1,2$, relative to the total prison population at each time $t$. Define also the entry/exit rates of each class and the entry/exit rates of the whole prison relative to the class size:
\begin{equation}\label{eq:tasas_t}
\mu_t^{{\rm in}}:= \frac{\lambda_t}{N_t},\quad
\mu_t^{{\rm out}}:= \frac{\rho^0_t}{N_t},\quad
\mu_t^{{\rm in},i} := \frac{\lambda^i_t}{N_t^i},\quad \mu_t^{{\rm out},i} := \frac{\rho^{0,i}_t}{N_t^i},\quad i=1,2. 
\end{equation}

\begin{remark}
Suppose that $d_t^1=d_t^2$ for all $t\geq0$. Then, we have the following relations between the entry/exit rates of each class and the entry/exit rates of the whole prison:
\begin{equation}\label{eq:aux005}
\left\{\quad\begin{split}
\mu_t^{{\rm in},1}\pi_t^1 + \mu_t^{{\rm in},2}\pi_t^2 \,=&\, \frac{\lambda_t^1}{N_t^1}\frac{N_t^1}{N_t} + \frac{\lambda_t^2}{N_t^2}\frac{N_t^2}{N_t} \,=\, \frac{\lambda_t^1+\lambda_t^2}{N_t} \,=\, \frac{\lambda_t}{N_t} \,=\, \mu_t^{{\rm in}},\\[2mm]
\mu_t^{{\rm out},1}\pi_t^1 + \mu_t^{{\rm out},2}\pi_t^2 \,=&\, \frac{\rho_t^{0,1}}{N_t^1}\frac{N_t^1}{N_t} + \frac{\rho_t^{0,2}}{N_t^2}\frac{N_t^2}{N_t} \,=\, \frac{\rho_t^{0,1}+\rho_t^{0,2}}{N_t} \,=\, \frac{\rho_t^{0}}{N_t} \,=\, \mu_t^{{\rm out}}.
\end{split}\right.
\end{equation}
\end{remark}

\subsubsection{Steady-state analysis for the two-class model}
We are now interested in obtaining explicit expressions and relations between the sizes of the different classes in equilibrium. Suppose that the entry rate $\lambda$, the distribution $q(\cdot)$, and the removal rate $d$ do not depend on $t$ and that the system operates at steady state. In this setting, \eqref{eq:N_qhat} applies to each class separately. From \eqref{eq:N_qhat}, the total number of prisoners in class $i$ is
\begin{equation}\label{eq:N^i_muerte}
N^i = \frac{\lambda^i}{d^i}(1-\hat{q}^i(d^i)),\quad i=1,2,
\end{equation}
where $\hat{q}^i(\cdot)$ denotes the Laplace transform of the distribution of initial sentence lengths with respect to each class $q^i(\cdot)$. If we suppose that the removal rates are equal for the two classes, namely, $d^1=d^2=:d>0$, from Lemma \ref{lemma:N_Ni}, we have $N = N^1 + N^2$, where, according to \eqref{eq:N_qhat},
\begin{equation}\label{eq:N_muerte}
N = \frac{\lambda}{d}(1-\hat{q}(d)).
\end{equation}

Then, the proportion of prisoners of each class in equilibrium is $\pi^i = N^i/N$, where, from \eqref{eq:N^i_muerte} and \eqref{eq:N_muerte}, we have
\begin{equation}\label{eq:pi_i_muerte}
\pi^i = p^i\frac{1-\hat{q}^i(d)}{1-\hat q(d)},\quad i=1,2,
\end{equation}
where $p^i=q(\C_i)$, as in \eqref{eq:p_i}. Additionally, the entry rates $\muini= \lambda^i/N^i$, $\mu^{{\rm in}}=\lambda/N$, relative to the stationary population sizes $N^i$, $N$, (as defined in \eqref{eq:tasas_t}) in equilibrium, are (from \eqref{eq:muin_muout_death})
\begin{equation}\label{eq:muin_comp_death}
\mu^{{\rm in}} = \frac{d}{1-\hat q(d)} ,\quad \mu^{{\rm in},1} = \frac{d}{1-\hat{q}^1(d)},\quad \mu^{{\rm in},2} = \frac{d}{1-\hat{q}^2(d)}.
\end{equation}

In a similar way, the exit rates $\muouti=\rho^{0,i}/N^i$, $\mu^{{\rm out}}=\rho^{0}/N$, relative to the stationary population sizes $N^i$, $N$ in equilibrium, are (from \eqref{eq:muin_muout_death})
\begin{equation}\label{eq:muout_i_death}
\mu^{{\rm out}} = \frac{d\hat q(d)}{1-\hat q(d)}, \quad \mu^{{\rm out},1} = \frac{d\hat{q}^1(d)}{1-\hat{q}^1(d)}, \quad \mu^{{\rm out},2} = \frac{d\hat{q}^2(d)}{1-\hat{q}^2(d)}.
\end{equation}

\begin{lemma}
Suppose that $d_1=d_2=d>0$. Then,
\begin{equation*}\label{eq:relacion_mus_delta}
\frac{1}{\mu^{{\rm in}}} = \frac{p^1}{\mu^{{\rm in},1}} + \frac{p^2}{\mu^{{\rm in},2}}.
\end{equation*}
\end{lemma}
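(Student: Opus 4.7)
The plan is to reduce the identity to a simple decomposition of the Laplace transform of $q(\cdot)$ across the partition $\mathcal{C} = \mathcal{C}_1 \cup \mathcal{C}_2$. Using the explicit expressions from \eqref{eq:muin_comp_death}, I would first write
\begin{equation*}
\frac{1}{\mu^{{\rm in}}} = \frac{1-\hat q(d)}{d}, \qquad \frac{1}{\mu^{{\rm in},i}} = \frac{1-\hat q^i(d)}{d}, \quad i=1,2,
\end{equation*}
so that after multiplying the claimed identity by $d$ and using $p^1+p^2 = q(\mathcal{C}_1)+q(\mathcal{C}_2) = q(\mathcal{C}) = 1$, the lemma becomes equivalent to the Laplace identity $\hat q(d) = p^1 \hat q^1(d) + p^2 \hat q^2(d)$.

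Next, I would verify this identity directly from the definitions. By \eqref{eq:q_i}, for every Borel set $I$, $p^i q^i(I) = q(I\cap \mathcal{C}_i)$, and since $\mathcal{C}_1,\mathcal{C}_2$ are disjoint with $\mathcal{C}_1\cup\mathcal{C}_2=\mathcal{C}$, additivity of the measure $q(\cdot)$ gives
\begin{equation*}
p^1 \hat q^1(d) + p^2 \hat q^2(d) = \int_{\mathcal{C}_1} e^{-dr} q(dr) + \int_{\mathcal{C}_2} e^{-dr} q(dr) = \int_{\mathcal{C}} e^{-dr} q(dr) = \hat q(d).
\end{equation*}
Substituting back proves the lemma.

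There is no real obstacle here: the argument is a one-line measure-theoretic splitting of the Laplace transform across the partition that defines the two classes, combined with the closed-form expressions for the relative entry rates already established in Proposition \ref{prop:N_mus_eq}. The only point worth being explicit about is the use of $p^1+p^2=1$ to rewrite $p^1(1-\hat q^1(d))+p^2(1-\hat q^2(d))$ as $1 - (p^1\hat q^1(d)+p^2\hat q^2(d))$, which is what makes the harmonic-mean-type relation appear.
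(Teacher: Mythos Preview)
Your proof is correct and follows essentially the same approach as the paper: both invoke the explicit formulas \eqref{eq:muin_comp_death}, use $p^1+p^2=1$, and reduce the identity to $p^1\hat q^1(d)+p^2\hat q^2(d)=\hat q(d)$, which is the Laplace-transform analogue of \eqref{eq:rel_qi_q}.
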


\begin{proof}
From \eqref{eq:muin_comp_death}, following a similar calculation as in \eqref{eq:rel_qi_q},
\begin{equation*}
\begin{split}
\frac{p^1}{\mu^{{\rm in},1}} + \frac{p^2}{\mu^{{\rm in},2}} = &\, p^1\frac{1-\hat{q}^1(d)}{d} + p^2\frac{1-\hat{q}^2(d)}{d} \\
=&\, \frac{1}{d}\left((p^1+p^2)- (p^1\hat{q}^1(d) + p^2\hat{q}^2(d)\right) = \frac{1-\hat q(d)}{d} = \frac{1}{\mu^{{\rm in}}}.
\end{split}
\end{equation*}$\,$
\end{proof}

\begin{corollary}\label{remark:orden_muis}
We have $\pi^1\leq p^1$, $\pi^2\geq p^2$, and $\mu^{{\rm in},2}\leq\mu^{{\rm in}}\leq\mu^{{\rm in},1}$.
\end{corollary}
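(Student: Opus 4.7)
The plan is to trace all three inequalities back to a single comparison of the Laplace transforms $\hat q^1(d)$ and $\hat q^2(d)$, exploiting only the fact that $q^1$ is supported on $\mathcal C_1=(0,r^\star]$ while $q^2$ is supported on $\mathcal C_2=(r^\star,\infty)$, and that $r\mapsto e^{-dr}$ is strictly decreasing for $d>0$ (the case $d=0$ being handled by Remark \ref{remark:caso_limite_d_0}).

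First I would establish the key monotonicity
\begin{equation*}
\hat q^2(d)\;\leq\;e^{-dr^\star}\;\leq\;\hat q^1(d),
\end{equation*}
which follows because $q^i$ is a probability measure and, for $r\in\mathcal C_1$, $e^{-dr}\geq e^{-dr^\star}$, while for $r\in\mathcal C_2$, $e^{-dr}\leq e^{-dr^\star}$. Next, using \eqref{eq:p_i} and \eqref{eq:q_i}, one has $q(\cdot)=p^1 q^1(\cdot)+p^2 q^2(\cdot)$ and therefore
\begin{equation*}
\hat q(d)\;=\;p^1\hat q^1(d)+p^2\hat q^2(d),
\end{equation*}
which is a convex combination (recall $p^1+p^2=q(\mathcal C)=1$). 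Consequently $\hat q^2(d)\leq\hat q(d)\leq\hat q^1(d)$, and equivalently
\begin{equation*}
0\;<\;1-\hat q^1(d)\;\leq\;1-\hat q(d)\;\leq\;1-\hat q^2(d).
\end{equation*}

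From these inequalities all three claims follow directly from the explicit formulas. Plugging them into \eqref{eq:pi_i_muerte} yields
\begin{equation*}
\pi^1=p^1\,\frac{1-\hat q^1(d)}{1-\hat q(d)}\leq p^1,\qquad \pi^2=p^2\,\frac{1-\hat q^2(d)}{1-\hat q(d)}\geq p^2,
\end{equation*}
and inverting in \eqref{eq:muin_comp_death} gives $\mu^{{\rm in},2}\leq\mu^{{\rm in}}\leq\mu^{{\rm in},1}$. Alternatively, the chain for the rates can be read off immediately from the preceding lemma, since $1/\mu^{{\rm in}}$ is a convex combination of $1/\mu^{{\rm in},1}$ and $1/\mu^{{\rm in},2}$, and the Laplace comparison above tells us which endpoint is the larger.

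There is no real obstacle here; the statement is essentially an expression of the intuition that restricting to longer sentences lowers the Laplace transform and hence raises the expected residence contribution. The only point that requires minor care is treating $d=0$ separately (or noting that the inequalities become equalities in the limit, consistently with Proposition \ref{rem:N_tasas_nomuerte}), and ensuring $1-\hat q^i(d)>0$ so that the rates are well defined, which holds for $d>0$ because $q^i$ assigns positive mass to the positive half-line.
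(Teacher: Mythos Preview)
Your proof is correct and follows essentially the same approach as the paper: both reduce the corollary to the chain $\hat q^2(d)\le \hat q(d)\le \hat q^1(d)$ and then read off the conclusions from formulas \eqref{eq:pi_i_muerte} and \eqref{eq:muin_comp_death}. The only cosmetic difference is that you pivot through $e^{-dr^\star}$ and invoke the convex-combination identity $\hat q=p^1\hat q^1+p^2\hat q^2$, whereas the paper bounds the differences $\hat q(d)-\hat q^2(d)$ and $\hat q^1(d)-\hat q(d)$ by a direct integral computation; the underlying idea is the same.
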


\begin{proof}
These results are obtained from the relation $\hat{q}^2(d)\leq\hat q(d)\leq\hat{q}^1(d)$. Indeed,
\begin{equation*}
\begin{split}
\hat q(d)-\hat{q}^2(d) =&\,  \int_0^{\infty}e^{-d r}q(dr) - \frac{1}{p^2}\int_{r^{\star}}^{\infty}e^{-d r}q(dr)\\
=&\,\left(1-\frac{1}{p^2}\right)\int_{r^{\star}}^{\infty}e^{-d r}q(dr) + \int_0^{r^{\star}}e^{-d r}q(dr) \\
\geq&\,\left(1-\frac{1}{p^2}\right)e^{-d r^{\star}} p^2 + e^{-d r^{\star}}p^1 \\
\geq&\,e^{-d r^{\star}} (p^2-1+p^1) = 0.
\end{split}
\end{equation*}
The proof of the other inequality is analogous. Thus, the results follow.
\end{proof}

\begin{remark}\label{rem:Ni_tasasi_nomuerte}
From Proposition \ref{rem:N_tasas_nomuerte}, taking limits as $d\searrow0$ in \eqref{eq:N^i_muerte}, \eqref{eq:N_muerte}, the proportions of prisoners that belong to each class, relative to the total number of prisoners, converge to
\begin{equation*}\label{eq:pi_i}
\pi^i := \frac{N^i}{N} \rightarrow \frac{\lambda^i T_{q^i}}{\lambda T_q} = p^i\frac{T_{q^i}}{T_q}, \quad i=1,2,
\end{equation*}
with $T_{q^i}$ being the mean initial sentence length relative to class $i$ (that is, the mean of $q^i(\cdot)$), which corresponds to the case without removal (see Proposition \ref{rem:N_tasas_nomuerte} and Remark \ref{remark:caso_limite_d_0}). Additionally, the entry and exit rates relative to the stationary population sizes from \eqref{eq:muin_comp_death}, \eqref{eq:muout_i_death}, converge to
\begin{equation*}\label{eq:mus_y_Ts}
\mu^{{\rm in}},\mu^{{\rm out}} \rightarrow \frac{1}{T_q}, \quad \mu^{{\rm in},1},\mu^{{\rm out},1} \rightarrow \frac{1}{T_{q^1}}, \quad \mu^{{\rm in},2},\mu^{{\rm out},2} \rightarrow \frac{1}{T_{q^2}}.
\end{equation*}
\end{remark}

\begin{remark}\label{rem:more_classes}
The previous procedure can be performed for a finite arbitrary number of classes, obtaining analogous results. Additionally, the classification in a prison population can be performed using other criteria. For instance, one can classify individuals into dangerous and nondangerous inmates or, related to a communicable disease, a possible classification can be a high-risk population (superspreader individuals) and a low-risk population.
\end{remark}

\subsection{Examples}\label{subsec:examples}

\subsubsection{Discrete initial sentence length distribution}

Consider the constant entry rate $\lambda_t=\lambda>0$ and death rate $d_t=d\geq0$, and suppose that the distribution $q(\cdot)$ is discrete, with only two possible initial sentence lengths $0<T_1<T_2$, with probabilities $p^1$ and $p^2$, respectively, such that $p^1+p^2=1$, that is,
\begin{equation*}
q'(r)=p^1 \delta_{T_1}'(r) + p^2 \delta_{T_2}'(r), \quad r\geq0,
\end{equation*}
with $\delta_{x}'(\cdot)$, the Dirac function concentrated on $x$ associated with the Dirac measure $\delta_{x}(\cdot)$ concentrated on $\{x\}$. Then, $T_q = p^1T_1+p^2T_2$, $\hat q(d) = p^1e^{-d T_1} + p^2e^{-d T_2}$. The total population, entry and exit rates at steady state are shown in Table \ref{table:discrete}.

\begin{table}[ht]
\vspace{.2cm}
\begin{center}
\caption{Total population, entry and exit rates at steady state for a discrete initial sentence length distribution.}\label{table:discrete}
\begin{tabular}{|c|c|c|}
\hlineB{3}
{\bf Quantity} & {\bf $d>0$} & $d=0$ \\ 
\hlineB{3}
&&\\[-2mm]
$N$ & $\dfrac{\lambda}{d}(1-(p^1e^{-d T_1} + p^2e^{-d T_2}))$ 	& 	$\lambda(p^1T_1+p^2T_2)$ \\ [2mm]
$\mu^{\rm in}$ & $\dfrac{d}{1-(p^1e^{-d T_1} + p^2e^{-d T_2})}$ & $\dfrac{1}{p^1T_1+p^2T_2}$\\[6mm]
$\mu^{\rm out}$ & $\dfrac{d(p^1e^{-d T_1} + p^2e^{-d T_2})}{1-(p^1e^{-d T_1} + p^2e^{-d T_2})}$ & $\dfrac{1}{p^1T_1+p^2T_2}$\\[-2mm]
&&\\
\hlineB{3}
\end{tabular}
\end{center} 
\end{table}

In this example, the population density with respect to the remaining sentence time is given by $\rho^0 = \lambda\hat q(d) = \lambda(p^1e^{-d T_1} + p^2e^{-d T_2})$ and
\begin{equation*}
\rho^r = \lambda \int_r^{\infty}e^{-d(\alpha-r)}q'(\alpha)d\alpha=\left\{\begin{split}
\quad\lambda (p^1e^{-d (T_1-r)}+p^2e^{-d (T_2-r)}),&\quad r<T_1,  \\
\lambda p^2e^{-d (T_2-r)},&\quad T_1\leq r<T_2, \\
0&\quad T_2\leq r, 
\end{split}\right.\hfill
\end{equation*}
for both cases $d>0$ and $d=0$.

If $\C_1 = (0, r^{\star}]$, $\C_2 = (r^{\star},\infty)$, with $T_1 < r^{\star}< T_2$, then $T_{q^1}=T_1$, $T_{q^2}=T_2$, and we obtain the expressions in Table \ref{table:discrete-classes}.
\begin{table}[ht]
\vspace{.2cm}
\begin{center}
\caption{Total population, entry and exit rates at steady state for a discrete initial sentence length distribution considering two classes.}\label{table:discrete-classes}
\begin{tabular}{|c|c|c|c|c|}
\hlineB{3}
\multirow{2}{*}{\bf Quantity} & \multicolumn{2}{|c|}{$d>0$} & \multicolumn{2}{|c|}{$d=0$} \\ \clineB{2-5}{3}
&&&&\\[-4mm]
& {\bf Class 1} & {\bf Class 2} & {\bf Class 1} & {\bf Class 2} \\ \clineB{2-5}{3}
\hlineB{3}
&&\\[-2mm]
$\hat{q}^i(d)$ & $e^{-d T_1}$ & $e^{-d T_2}$  & $1$ & $1$\\[2mm]
$N^i$ & $\dfrac{\lambda p^1}{d}(1-e^{-d T_1})$ & $\dfrac{\lambda p^2}{d}(1-e^{-d T_2})$ & $\lambda p^1 T_1$  & $\lambda p^2 T_2$ \\ [2mm]
$\mu^{{\rm in},i}$ & $\dfrac{d}{1-e^{-d T_1}}$ & $\dfrac{d}{1-e^{-d T_2}}$ & $\dfrac{1}{T_1}$ & $\dfrac{1}{T_2}$ \\ [3mm]
$\mu^{{\rm out},i}$ & $\dfrac{d e^{-d T_1}}{1-e^{-d T_1}}$ & $\dfrac{d e^{-d T_2}}{1-e^{-d T_2}}$ & $\dfrac{1}{T_1}$ & $\dfrac{1}{T_2}$\\[-2mm]
&&\\
\hlineB{3}
\end{tabular}
\end{center} 
\end{table}

\subsubsection{Continuous initial sentence length distribution}

Consider a constant entry rate $\lambda_t=\lambda>0$ and a death rate $d_t=d\geq0$, and suppose that $q(\cdot)$ follows an exponential distribution with rate $c>0$, that is, its probability density function corresponds to
\begin{equation*}
q'(r)=ce^{-cr}, \quad r\geq0.
\end{equation*}
Then, $T_q = \dfrac{1}{c}$ and $\hat q(d) = \dfrac{c}{d + c}$, obtaining the expressions in Table \ref{table:exponential}. The population density function with respect to the remaining sentence time is
\begin{equation*}
\rho^r = \dfrac{\lambda c}{d + c}e^{-cr} ,\quad r\geq0. 
\end{equation*}

\begin{table}[ht]
\vspace{.2cm}
\begin{center}
\caption{Total population, entry and exit rates at steady state for an exponential sentence length distribution.}\label{table:exponential}
\begin{tabular}{|c|c|c|}
\hlineB{3}
{\bf Quantity} & $d>0$ & $d=0$ \\ 
\hlineB{3}
&&\\[-2mm]
$N$ & $\dfrac{\lambda}{d+c}$ 	& 	$\dfrac{\lambda}{c}$ \\ [3mm]
$\mu^{\rm in}$ & $d + c$ & $c$\\[2mm]
$\mu^{\rm out}$ & $c$ & $c$\\[-2mm]
&&\\
\hlineB{3}
\end{tabular}
\end{center} 
\end{table}

If $\mathcal C_1 = (0,r^{\star}]$, $\mathcal C_2 = (r^{\star},\infty)$, then $p^1 = 1-e^{-cr^{\star}}$, $p^2 = e^{-cr^{\star}}$, $T_{q^1}= \frac{1}{c}-\frac{r^{\star}e^{-cr^{\star}}}{1-e^{-cr^{\star}}}$, and $T_{q^2}=\frac{1}{c}+r^{\star}$. In Table \ref{table:exponential-classes1}, we report the corresponding expressions when $d>0$ and we do so in Table \ref{table:exponential-classes2} when $d=0$.

\begin{table}[ht]
\vspace{.2cm}
\begin{center}
\caption{Total population, entry and exit rates at steady state for an exponential sentence length distribution considering two classes when $d > 0$.}\label{table:exponential-classes1}
\resizebox{\textwidth}{!}{%
\begin{tabular}{|c|c|c|}
\hlineB{3}
{\bf Quantity} & {\bf Class 1} & {\bf Class 2}\\ 
\hlineB{3}
&&\\[-2mm]
$\hat{q}^i(d)$ & $\dfrac{c}{d+c}\dfrac{1-e^{-(d + c)r^{\star}}}{1-e^{-cr^{\star}}}$	&	$\dfrac{c}{d+c}\dfrac{e^{-(d + c)r^{\star}}}{e^{-cr^{\star}}}$\\[3mm]
$N^i$ &  $\dfrac{\lambda}{d}\left( (1-e^{-cr^{\star}}) - \dfrac{c}{d+c}(1-e^{-(d+c)r^{\star}}) \right)$
&	$\dfrac{\lambda}{d}\left( e^{-cr^{\star}} - \dfrac{c}{d+c}e^{-(d+c)r^{\star}} \right)$ \\[3mm]
$\mu^{{\rm in},i}$ &  $d\left(  1 - \frac{c}{d+c}\frac{1-e^{-(d+c)r^{\star}}}{1-e^{-cr^{\star}}}  \right)^{-1}$ 	&	$d\left( 1 - \frac{c}{d+c}e^{-d r^{\star}} \right)^{-1}$\\[4mm]
$\mu^{{\rm out},i}$ &  $\dfrac{cd}{d+c}\dfrac{1-e^{-(d + c)r^{\star}}}{1-e^{-cr^{\star}}} \left(  1 - \frac{c}{d+c}\frac{1-e^{-(d+c)r^{\star}}}{1-e^{-cr^{\star}}} \right)^{-1}$ 	& 	$\dfrac{cd}{d+c}\dfrac{e^{-(d + c)r^{\star}}}{e^{-cr^{\star}}} \left( 1 - \frac{c}{d+c}e^{-d r^{\star}} \right)^{-1}$ \\[-2mm]
&&\\
\hlineB{3}
\end{tabular}}
\end{center} 
\end{table}

\begin{table}[ht]
\vspace{.2cm}
\begin{center}
\caption{Total population, entry and exit rates at steady state for an exponential sentence length distribution considering two classes when $d=0$.}\label{table:exponential-classes2}
\begin{tabular}{|c|c|c|}
\hlineB{3}
{\bf Quantity} & {\bf Class 1} & {\bf Class 2}\\ 
\hlineB{3}
&&\\[-4mm]
$N^i$  &  $\lambda\left( \dfrac{1-e^{-cr^{\star}}}{c} - r^{\star}e^{-cr^{\star}} \right)$ &	$\lambda e^{-cr^{\star}}\left( r^{\star} + \dfrac{1}{c} \right)$ \\[3mm]
$\mu^{{\rm in},i}$  &  $\left(\dfrac{1}{c}-\dfrac{r^{\star}e^{-cr^{\star}}}{1-e^{-cr^{\star}}}\right)^{-1}$ 	&	$\left(\dfrac{1}{c}+r^{\star}\right)^{-1}$\\[4mm]
$\mu^{{\rm out},i} $   &   $\left(\dfrac{1}{c}-\dfrac{r^{\star}e^{-cr^{\star}}}{1-e^{-cr^{\star}}}\right)^{-1}$ 	& 	$\left(\dfrac{1}{c}+r^{\star}\right)^{-1}$ \\[-2mm]
&&\\
\hlineB{3}
\end{tabular}
\end{center} 
\end{table}

\subsection{Example of application: estimation of initial sentence distribution}\label{subsec:examples}

It is often difficult to have access to the distribution $q(\cdot)$ of initial sentence lengths. Nevertheless, it is possible that the information of the operation of the prison is stored or reported in the form of periodic snapshots or averages of its status, consisting of the entry rate (new inmates or transferred), exit rate (by finishing the sentence), removal rate (by transfer, death, pardon or commutation of the sentence of prisoners), and histograms of the initial sentence lengths relative to the existing population. 

Suppose that our source of information about the prison contains the aggregated entry rate $\lambda_{\rm data}$, the removal rate $d_{\rm data}$, the exit rate $\rho_{\rm data}^0$, and a histogram of the current state of the prison, consisting of the frequencies $N_{\rm data}^i$ of inmates with respect to their initial sentence length, split in $n$ classes corresponding to the intervals $\mathcal C_i=(T_{i-1},T_i]$, with $0=T_0<\dots<T_n=T_{\rm max}<\infty$. We wish to estimate the initial sentence distribution $q(\cdot)$ from the known data. For this, we consider the same interval classification of sentence lengths, and write $q(\cdot)=\sum_{i=1}^n p^iq^i(\cdot)$, where the probabilities by class $(p^i)_{i=1}^n$ and the distributions conditional to the classes $(q^i(\cdot))_{i=1}^n$ are unknown.

Notice that the information given in the histogram corresponds to the number of inmates whose initial sentence length belongs to the interval $\mathcal C_i$. Thus, defining $N_{\rm data}:=\sum_{i=1}^n N_{\rm data}^i$, the proportion $\pi_{\rm data}^i:=N_{\rm data}^i/N_{\rm data}$ is an estimator of the proportion $\pi^i$ of prisoners at each class.

Based on \eqref{eq:pi_i_muerte} and Remark \ref{rem:more_classes}, we can obtain an estimator of $(p^i)_{i=1}^n$. Indeed, if $(\pi^i)_{i=1}^n$, $(q(\cdot)^i)_{i=1}^n$, and $d$ were known,
from \eqref{eq:pi_i_muerte} we would obtain
\begin{equation*}
1-\hat q(d) =  p^i\frac{1-\hat{q}^i(d)}{\pi^i}=p^j\frac{1-\hat{q}^j(d)}{\pi^j},\quad \forall i,j=1,\dots,n,
\end{equation*}
which implies, choosing a particular (fixed) index $j^{\star}\in\{1,\dots,n\}$, that
\begin{equation*}
p^i = p^{j^{\star}}\frac{\pi^i}{\pi^{j^{\star}}}\frac{1-\hat{q}^{j^{\star}}(d)}{1-\hat{q}^i(d)},\quad \forall i=1,\dots,n.
\end{equation*}
Imposing $\sum_{i=1}^np^i=1$, we obtain $p^{j^{\star}}=\frac{\pi^{j^{\star}}}{1-\hat q^{j^{\star}}(d)}\left(\sum_{j=1}^n \frac{\pi^j}{1-\hat q^j(d)}\right)^{-1}$, and then,
\begin{equation}
p^i = \frac{\pi^i/(1-\hat q^i(d))}{\sum_{j=1}^n \pi^j/(1-\hat q^j(d))},\quad i=1,\dots,n.
\end{equation}

Given the discrete nature of a histogram, we can suppose that for each $i\in\{1,\dots,n\}$ the distribution $q^i(\cdot)$ is concentrated in a point $S_i\in(T_{i-1},T_i]$ which operates as a representative of the interval $\mathcal C_i=(T_{i-1},T_i]$, that is, $q^i(\cdot)=\delta_{S_i}(\cdot)$, whose Laplace transform is $\hat q^i(d)=e^{-dS_i}$. Then, our method to estimate $(p^i)_{i=1}^n$ is:
\begin{enumerate}
\item Compute $\pi_{\rm data}^i:=N_{\rm data}^i/N_{\rm data}$
\item Compute $\displaystyle \tilde p^i = \frac{\pi_{\rm data}^i/(1-e^{-d_{\rm data}S_i})}{\sum_{j=1}^n \pi_{\rm data}^j/(1-e^{-d_{\rm data}S_j})}= \frac{N_{\rm data}^i/(1-e^{-d_{\rm data}S_i})}{\sum_{j=1}^n N_{\rm data}^j/(1-e^{-d_{\rm data}S_j})}$.
\end{enumerate}

It is possible to compare the theoretical quantities (given by the model) with those computed from the real data. For instance, using $\hat q(d) = \sum_{i=1}^n\tilde p^ie^{-dS_i}$ the Laplace transform of the estimated distribution $q(\cdot)$, we can:
\begin{enumerate}
\item compare $N_{\rm data}$ with $\tilde N_{\rm model} = \lambda_{\rm data}\frac{1-\hat{q}(d_{\rm data})}{d_{\rm data}}$.
\item compare $\mu_{\rm data}^{\rm in}=\lambda_{\rm data}/N_{\rm data}$ with $\tilde \mu_{\rm model}^{\rm in} = \frac{d_{\rm data}}{1-\hat q(d_{\rm data})}$.
\item compare $\mu_{\rm data}^{\rm out}=\rho_{\rm data}^0/N_{\rm data}$ with $\tilde \mu_{\rm model}^{\rm out} = \frac{d_{\rm data}\hat q(d_{\rm data})}{1-\hat q(d_{\rm data})}$.
\item compare $\rho_{\rm data}^0$ with $\tilde\rho_{\rm model}^0= \tilde \mu_{\rm model}^{\rm out}  \tilde N_{\rm model}$.
\end{enumerate}

\subsubsection{Application to a real prison}

In this part, we apply the procedure described in the previous section to estimate the total number of inmates, entry, removal and exit rates, using real data from the prison \emph{Colina 1} in the Metropolitan Region (Santiago), Chile. This prison is a penitentiary center in closed regime where convicted inmates, eventually transferred from other prisons, serve their sentences.

To estimate the distribution $(\pi^i)_i$ we consider the average of histograms of the initial sentencing lengths of the existing inmate population in the whole Metropolitan Region. Figure \ref{fig:histoRM} shows the proportions by class associated to these histograms for the years 2016-2019, obtained from \cite{histograms}, which are quite similar among different years. In these histograms, the frequencies are distributed in the following intervals of initial sentencing lengths: $(0, 15]$ days, $(15,600]$ days, $(600$ days$,$ $3$ years$]$, $(3,5]$ years, $(5,10]$ years, $(10,15]$ years, $(15,20]$ years, and $(20,40]$ years. The average vector obtained is $\pi_{\rm data}=(0.002,0.14, 0.061, 0.219, 0.341, 0.153, 0.049, 0.035)$.

\begin{figure}[ht]
\begin{center}
\includegraphics[scale=0.6]{./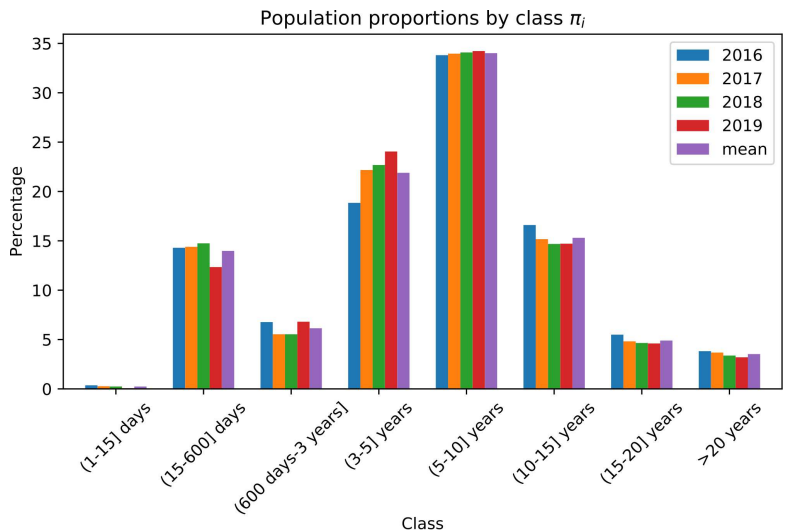}
\captionof{figure}{Histograms of the initial sentencing lengths of the existing inmate population in the whole Metropolitan Region for the years 2016-2019.}\label{fig:histoRM}
\end{center}
\end{figure}

Regarding the estimation of the other parameters, we had access to the data from  prison \emph{Colina 1} for years 2010-2014 and 2019-2020 in \cite{prisonreports}. The data we use is presented in Table \ref{table:colina1}. This information includes: 
\begin{itemize}
\item monthly averages of the number of total inmates $N_{\rm data}$ by year;
\item yearly entry rate $\lambda_{\rm data}$ of new inmates or transferred from other prisons; 
\item yearly number $D_{\rm data}$ of removed inmates due to deaths, transfers to other prisons, etc.;
\item yearly exit rate $\rho_{\rm data}^0$ of inmates who are finishing their sentences.
\end{itemize}

\begin{table}[t]
\centering
\caption{Data obtained from \cite{prisonreports} corresponding to prison \emph{Colina 1} (Santiago, Metropolitan Region, Chile): Number of total inmates ($N_{\rm data}$), total entry rate $\lambda_{\rm data}$ of new inmates or transferred from other prisons; the removal rate $d_{\rm data}$ of inmates due to deaths, transfers to other prisons, etc., and the exit rate $\rho_{\rm data}^0$ of inmates who are finishing their sentences.} \label{table:colina1}
\begin{tabular}{llllll}
  \hline \\  [-0.5ex]
  \textbf{Year} & $N_{\rm data}$ &$\lambda_{\rm data}$ & $\rho_{\rm data}^0$ & $D_{\rm data}$ & $d_{\rm data}$ \\ [1.5ex]
  \hline \\ [-1.5ex]
2010 & 1,838 & 1,086 & 404 & 289 & 0.16\\
2011 & 1,986 & 1,086 & 352 & 363 & 0.18\\ 
2012 & 2,152 &   539 & 396 & 523 & 0.24\\
2013 & 1,674 &   802 & 251 & 613 & 0.37\\
2014 & 1,752 &   804 & 254 & 736 & 0.42\\
2019 & 2,011 & 1,196 & 110 & 931 & 0.46\\
2020 & 1,951 &   964 &  91 & 894 & 0.46\\[0.5ex]
\hline
\textbf{Averages} &1,909 & 925& 265 & 621 & 0.33\\[0.5ex]
\hline
\end{tabular}
\end{table}

We estimate each of the parameters listed above by the average of the corresponding variables in Table \ref{table:colina1}. The removal rate $d_{\rm data}$ is estimated as the average of the ratios $D_{\rm data}/N_{\rm data}$.

Once obtained the proportions $\pi_{\rm data}^i$, to estimate the probabilities $\displaystyle \tilde p^i $ we consider as representative $S_i$ of the sentencing length interval $(T_{i-1},T_i]$ the value $S_i=T_i$. The obtained distribution of $\pi_{\rm data}^i$ and  $\displaystyle \tilde p^i $  (and then $q(\cdot)$) are depicted in Figure \ref{fig:distributions}.

\begin{figure}[ht]
\begin{center}
\includegraphics[scale=0.6]{./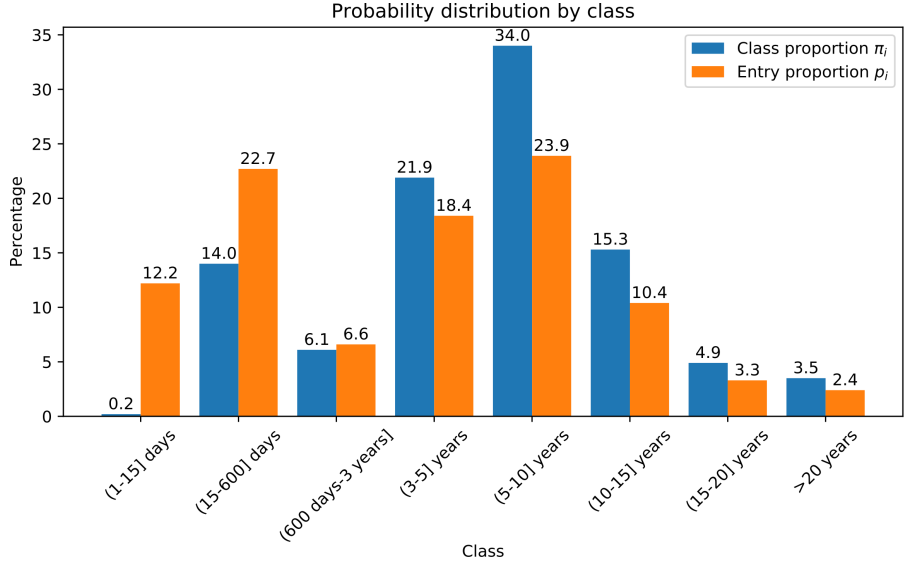}
\captionof{figure}{Distribution of $\pi_{\rm data}^i$ obtained from histograms of the initial sentencing lengths of the existing inmate population in the all Metropolitan Region for the years 2016-2019 (see Fig. \ref{fig:histoRM})
and  estimated distribution $q(\cdot)$ given by estimated probabilities $\displaystyle \tilde p^i$. }\label{fig:distributions}
\end{center}
\end{figure}

The comparison between the total number of inmates ($N$), the entry and exit rates relative to the populations sizes ($\mu^{{\rm in}}$ and $\mu^{{\rm out}}$ respectively), obtained directly from the data and obtained from the model (using $\lambda_{\rm data}$, $d_{\rm data}$ and the histograms), is shown in Table \ref{table:comparison}.

\begin{table}[h!]
\centering
\caption{Comparison between the total number of inmates ($N$), the entry and exit rates relative to the populations sizes ($\mu^{{\rm in}}$ and $\mu^{{\rm out}}$ respectively) obtained directly from the data and obtained from the model (using $N_{\rm data}$, $\lambda_{\rm data}$, $d_{\rm data}$ and the histograms).} \label{table:comparison}
\resizebox{\textwidth}{!}{
\begin{tabular}{lclcc}
  \hline \\  [-0.5ex]
  \textbf{Data} &  \textbf{Value}  &  \textbf{Model}  &  \textbf{Value} &  \textbf{\% Relative difference}   \\ [1.5ex]
  \hline \\ [-1.5ex]
  $N_{\rm data}$ & 1,909&$\tilde N_{\rm model} = \lambda_{\rm data}\frac{1-\hat{q}(d_{\rm data})}{d_{\rm data}}$&1,914& 0.26\%\\ [1.5ex]
$\mu_{\rm data}^{\rm in}=\lambda_{\rm data}/N_{\rm data}$ & 0.49 &$\tilde \mu_{\rm model}^{\rm in} = \frac{d_{\rm data}}{1-\hat q(d_{\rm data})}$& 0.48 & -0.49\% \\ [1.5ex]
$\mu_{\rm data}^{\rm out}=\rho_{\rm data}^0/N_{\rm data}$ & 0.14&  $\tilde \mu_{\rm model}^{\rm out} = \frac{d_{\rm data}\hat q(d_{\rm data})}{1-\hat q(d_{\rm data})}$&0.16& 14.28\%\\ [1.5ex]
$\rho_{\rm data}^0$ &265& $\tilde \rho_{\rm model}^0= \tilde \mu_{\rm model}^{\rm out}  \tilde N_{\rm model}$& 299& 12.83\%\\ [1.5ex]
  \hline
\end{tabular}
}
\end{table}

The differences observed in the Table \ref{table:comparison} are not very large and they can be explained by: (i) The consideration of the histograms of the initial sentencing lengths of the existing inmate population in all the Metropolitan Region and not just those of the prison studied (not available); (ii) The assumption that all the admitted inmates start serving their sentence in the prison under study, which is not totally true for inmates transferred from other prisons where they have served part of their sentences; (iii) The consideration of the value $T_i$ as representative of the sentence length interval $(T_{i-1},T_i]$; (iv) The assumption that the removal rate $d$ is independent of the remaining sentence time in prison, which, depending on the reason of the removal (for instance, by pardon or commutation of the sentence) can be a too strong hypothesis. Nevertheless, the results obtained with the model suggest a reliable approximation of the analyzed prison situation.


\section{SIS models in inmate populations}\label{sec:SIS}

In this section, we consider the spread of a communicable disease in a prison population, modeling the disease dynamics by the SIS (susceptible-infected-susceptible) model (i.e., the disease confers no immunity). In general, an SIS model is appropriate for a bacterial disease. Suppose that disease transmission occurs at the per capita contact rate (sufficient to transmit the disease) $\beta_t > 0$, and infective individuals recover from the disease at a rate $\gamma_t >0$ with no immunity. Assume further that removals from the prison occur at a rate $d_t\geq0$, the entry rate of inmates to the prison is $\lambda_t\geq0$, and the distribution of initial sentence lengths is $q_t(\cdot)$ (as considered in Section \ref{sec:modelo}). We assume that a proportion $\alpha_I>0$ of the new inmates is infective and that a proportion $\alpha_S = 1-\alpha_I$ is susceptible to the infection. The proportion $\alpha_I>0$ is assumed to be the steady-state prevalence of the population outside the prison.

First, we consider the whole prison as a single class. Define $S_t \geq 0$ and $I_t \geq 0$ as the quantities of susceptible and infective individuals, respectively, and $N_t$ as the population size at time $t\geq0$. Then, the disease transmission dynamics in the population can be described by the following system of differential equations:
\begin{equation}\label{eq:sys_1clase}
\left\{\quad\begin{split}
\dot{N}_t & = \lambda_t - \rho^0_t-d_t N_t,\\
\dot{S}_t & = \alpha_S\lambda_t - \beta_t S_t\frac{I_t}{N_t} + \gamma_t I_t -\rho^0_t \frac{S_t}{N_t} - d_t S_t, \\
\dot{I}_t & = \alpha_I\lambda_t + \beta_t S_t\frac{I_t}{N_t} - \gamma_t I_t - \rho^0_t \frac{I_t}{N_t} - d_t I_t,
\end{split}\right.
\end{equation}
where the first equation comes from \eqref{eq:balance_poblacion}, with $\rho^0_t$ being the instantaneous exit rate of prisoners, which is a solution of \eqref{eq:pde_muerte}. We note that the set $\{(N,S,I)\,|\,N-(S+I)=0\}$ is invariant under \eqref{eq:sys_1clase}. Indeed,
\begin{equation*}
\frac{d}{dt}(N_t-(S_t+I_t)) = -\left[ \frac{\rho^0_t}{N_t} + d_t \right](N_t-(S_t+I_t)).
\end{equation*}
Since we suppose that, at the beginning of the process, $N_0=S_0+I_0$, we can replace $S_t=N_t-I_t$ in the equation for $I$, and defining $x_t = I_t/N_t$, we obtain the equation
\begin{equation}\label{eq:SIS-logistic}
\dot x_t 
\,=\, \alpha_I\mu_t^{{\rm in}} + \beta_t x_t(1-x_t) - \gamma_t x_t - \mu_t^{{\rm in}} x_t, \qquad \mu_t^{{\rm in}}:= \frac{\lambda_t}{N_t}.
\end{equation}

Now, suppose that the population is divided into two classes of initial sentence lengths, as described in Section \ref{subsec:clases}, where $i=1$ (resp. $i=2$) stands for the class of short (resp. long) sentences, without affecting the homogeneous mixing of inmates. Define $S_t^i \geq 0$ and $I_t^i \geq 0$ as the quantities of susceptible and infective individuals that belong to class $i$, respectively, with $N_t^i$ being the population size of class $i$ at time $t\geq0$ ($i=1,2$). Each susceptible individual of a class may have contact with an infective individual of his/her own class or of the other class. We suppose that both classes share the same removal rate $d_t\geq0$. Then, the disease transmission dynamics can be described by the following system of coupled differential equations for $i=1,2$:
\begin{equation}\label{eq:sys_2clases}
\left\{\quad\begin{split}
\dot N_t^i & = \lambda^i_t - \rho^{0,i}_t-d_t N_t^i,\\
\dot S_t^i & = \alpha_S\lambda^i_t - \beta_t S_t^i\frac{I_t^1+I_t^2}{N_t} + \gamma_t I_t^i -\rho^{0,i}_t \frac{S_t^i}{N_t^i} - d_t S_t^i, \\
\dot I_t^i & = \alpha_I\lambda^i_t + \beta_t S_t^i\frac{I_t^1+I_t^2}{N_t} - \gamma_t I_t^i - \rho^{0,i}_t \frac{I_t^i}{N_t^i} - d_t I_t^i,
\end{split}\right.
\end{equation}
where the equations for $N^i$ come from \eqref{eq:balance_poblacioN^i} and $\rho_t^{0,i}$ are solutions of \eqref{eq:pde_muerte_i}.
In this case, we also have the invariance of the set $\{(N^i,S^i,I^i)\,|\, N^i-(S^i+I^i) =0,\,i=1,2 \}$, and then we can replace $S_t^i=N_t^i-I_t^i$ in the equation for $I^i$ in \eqref{eq:sys_2clases}. Thus, $I^i$ solves
\begin{equation*}
\dot I_t^i = \alpha_I\lambda^i_t + \beta_t\frac{I_t^1+I_t^2}{N_t^1+N_t^2}(N_t^i-I_t^i) - \gamma_t I_t^i - \rho_t^{0,i}\frac{I_t^i}{N_t^i} - d_t I_t^i.
\end{equation*}

Thanks to Lemma \ref{lemma:N_Ni}, $N=N^1+N^2$ solves the same equation as in \eqref{eq:sys_1clase}. Defining $x_t^i:= I_t^i/N_t$ as the proportion of infective people in class $i$ with respect to the total prison population, $x^i$ satisfies
\begin{equation}\label{eq:SIS2class-logistic}
\dot x_t^i 
\,=\, \alpha_I\mu_t^{{\rm in},i}\pi_t^i + \beta_t(x_t^1+x_t^2)(\pi_t^i-x_t^i) - \gamma_t x_t^i - \mu_t^{{\rm out},i}x_t^i - (\mu_t^{\rm in}-\mu_t^{\rm out}) x_t^i,\quad i=1,2,
\end{equation}
where $\pi_t^i=N_t^i/N_t$ is the proportion of inmates of each class relative to the total prison population at each time $t$ and $\mu_t^{{\rm in}}$, $\mu_t^{{\rm out}}$, $\mu_t^{{\rm in},i}$, and $\mu_t^{{\rm out},i}$ as in \eqref{eq:tasas_t}.

\subsection{Steady-state prevalence comparison between SIS models with and without sentencing length structure}

In this section, we analyze and compare the equilibria of equations \eqref{eq:SIS-logistic} (single-class model) and \eqref{eq:SIS2class-logistic} (two-class model) and provide conditions under which the single-class model underestimates the proportion of infected inmates with respect to the two-class model.

\begin{remark}
Note that the equations associated with the populations $N,N^1,N^2$ in models \eqref{eq:sys_1clase} and \eqref{eq:sys_2clases} are independent of the epidemiological partition $S,I$, since we do not consider specific removals due to illness. Thus, if $\lambda$, $d$ and $q(\cdot)$ do not depend on $t$, we can perform a partial analysis considering the populations $N,N^1,N^2$ in equilibrium. Then, the total populations $N$, $N^1$, $N^2$ are constant (given by \eqref{eq:N^i_muerte} and \eqref{eq:N_muerte}), as are the rates $\mu^{{\rm in}}$, $\mu^{{\rm out}}$, $\muini$, $\muouti$ and the proportions $\pi^i$, $i=1,2$ which, from \eqref{eq:tasas_t}, become
\begin{equation}\label{eq:tasas_eq}
\begin{split}
\mu^{{\rm in}} = \frac{d}{1-\hat q(d)},\quad
\mu^{{\rm out}} = \hat q(d)\mu^{{\rm in}},\,\\[2mm]
\muini = \frac{d}{1-\hat q^i(d)},\quad
\muouti = \hat q^i(d)\muini,\quad
\pi^i = p^i\frac{1-\hat{q}^i(d)}{1-\hat q(d)}, 
\end{split}
\end{equation}
As we suppose the same (constant) removal rate for both classes, \eqref{eq:muin_muout_delta} states that
\begin{equation*}\label{eq:aux0003}
\mu^{{\rm in}} - \mu^{{\rm out}} \,=\, d\,=\, \muini - \muouti, \quad i=1,2.
\end{equation*}

Then, supposing that $\beta,\gamma$ does not depend on $t$, the equations corresponding to the epidemiological parts of the models \eqref{eq:SIS-logistic} and \eqref{eq:SIS2class-logistic}, under the assumption of population at equilibrium, take the simpler form
\begin{align}\label{eq:SIS-logistic-x}
\dot x_t =&\, \alpha_I\mu^{{\rm in}} + \beta x_t(1-x_t) - (\gamma + \mu^{{\rm in}}) x_t,\\ \label{eq:SIS-logistic-xi} 
\dot x_t^i =&\, \alpha_I\pi^i\muini + \beta(x_t^1+x_t^2)(\pi^i-x_t^i) - (\gamma + \muini)x_t^i,\quad i=1,2.
\end{align}

Since in this section we study the behavior in equilibrium, it suffices to study the equilibria of \eqref{eq:SIS-logistic-x} and \eqref{eq:SIS-logistic-xi}.
\end{remark}

Let us denote the positive equilibrium of the single-class model
\eqref{eq:SIS-logistic-x} by $\xeq$, and the positive equilibrium of the two-class model
\eqref{eq:SIS-logistic-xi} by $(\xeqa,\xeqb)$. Define the total proportion of infective individuals in equilibrium in the two-class model by $\omegaeq := \xeqa+\xeqb$.
\medskip

We present the main result of this section in Proposition \ref{prop:comparacion_clases_eq}:

\begin{proposition}\label{prop:comparacion_clases_eq}
Suppose that $\alpha_I,\muini>0$, $i=1,2$. Then, $\xeq,\omegaeq>0$. Moreover, $\xeq<\omegaeq$ ($>,=$ resp.) if and only if $\alpha_I<1-\frac{\gamma}{\beta}$ ($>,=$ resp. ).
\end{proposition}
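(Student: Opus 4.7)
The plan is to reduce both equilibria to scalar root-finding problems and then connect them via a clean algebraic identity. For the single-class equation \eqref{eq:SIS-logistic-x}, the equilibrium condition is the downward-opening quadratic
$$G(x):=\alpha_I\mu^{{\rm in}}+\beta x(1-x)-(\gamma+\mu^{{\rm in}})x,$$
with $G(0)=\alpha_I\mu^{{\rm in}}>0$; since the product of the two roots is $-\alpha_I\mu^{{\rm in}}/\beta<0$, there is a unique positive root $\xeq$. For \eqref{eq:SIS-logistic-xi}, setting $\dot x^i=0$ and solving linearly for $\xeqi$ in terms of $\omega=x^1+x^2$ yields
$$\xeqi=\pi^i\,\frac{\alpha_I\muini+\beta\omega}{\beta\omega+\gamma+\muini},\qquad i=1,2,$$
so the consistency $\omega=\xeqa+\xeqb$ becomes the fixed-point equation $F(\omega)=\omega$, with $F(\omega):=\sum_{i=1}^2\pi^i\frac{\alpha_I\muini+\beta\omega}{\beta\omega+\gamma+\muini}$. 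Each summand equals $1-\frac{\gamma+(1-\alpha_I)\muini}{\beta\omega+\gamma+\muini}$, hence is strictly increasing and strictly concave, so $F-\mathrm{id}$ is strictly concave, starts positive ($F(0)>0$) and tends to $-\infty$; this produces a unique positive fixed point $\omegaeq>0$.

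Next I would compute $G(\omegaeq)$ explicitly. Summing the two equilibrium equations of \eqref{eq:SIS-logistic-xi} and using $\sum_i\pi^i\muini=\mu^{{\rm in}}$ from \eqref{eq:aux005}, I obtain $\alpha_I\mu^{{\rm in}}+\beta\omegaeq(1-\omegaeq)-\gamma\omegaeq=\sum_i\muini\xeqi$, which rearranges to
$$G(\omegaeq)=\sum_i\muini\xeqi-\mu^{{\rm in}}\omegaeq=(\mu^{{\rm in},1}-\mu^{{\rm in},2})(\pi^2\xeqa-\pi^1\xeqb).$$
Substituting the explicit formula for $\xeqi$ and simplifying the Möbius-type difference $f(\mu^{{\rm in},1})-f(\mu^{{\rm in},2})$ with $f(b)=\frac{\alpha_I b+\beta\omegaeq}{\beta\omegaeq+\gamma+b}$ yields the key identity
$$G(\omegaeq)=\frac{(\mu^{{\rm in},1}-\mu^{{\rm in},2})^2\,\pi^1\pi^2\,[\alpha_I\gamma-(1-\alpha_I)\beta\omegaeq]}{(\beta\omegaeq+\gamma+\mu^{{\rm in},1})(\beta\omegaeq+\gamma+\mu^{{\rm in},2})}.$$
Hence $\mathrm{sgn}\,G(\omegaeq)=\mathrm{sgn}\bigl[\alpha_I\gamma-(1-\alpha_I)\beta\omegaeq\bigr]$; since $G$ is a downward parabola with unique positive root $\xeq$, this gives $\omegaeq>\xeq\iff G(\omegaeq)<0\iff\omegaeq>\omega_0$, where $\omega_0:=\alpha_I\gamma/[(1-\alpha_I)\beta]$ (noting $\alpha_I\in(0,1)$).

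It remains to relate $\omegaeq$ to $\omega_0$. Direct substitution gives $\beta\omega_0+\gamma=\gamma/(1-\alpha_I)$ and $\alpha_I\muini+\beta\omega_0=\alpha_I(\muini+\gamma/(1-\alpha_I))$, so each summand in $F(\omega_0)$ collapses to $\alpha_I$, proving $F(\omega_0)=\alpha_I$. Then $F(\omega_0)>\omega_0\iff\alpha_I>\omega_0\iff(1-\alpha_I)\beta>\gamma\iff\alpha_I<1-\gamma/\beta$, while the strict concavity of $F-\mathrm{id}$ shows $F(\omega_0)>\omega_0\iff\omegaeq>\omega_0$. Chaining the equivalences yields $\omegaeq>\xeq\iff\alpha_I<1-\gamma/\beta$; the $=$ and reverse cases follow by the same argument with the sign of $\alpha_I\gamma-(1-\alpha_I)\beta\omegaeq$ flipped or zero. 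The main obstacle I expect is the algebraic simplification producing the squared factor $(\mu^{{\rm in},1}-\mu^{{\rm in},2})^2$ together with the clean linear factor $\alpha_I\gamma-(1-\alpha_I)\beta\omegaeq$; once this identity is in hand, the short logical chain through $F(\omega_0)=\alpha_I$ delivers the equivalence immediately.
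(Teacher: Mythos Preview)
Your argument is correct, and the overall architecture matches the paper's: reduce each model to a scalar equation with a unique positive root, then compare by evaluating one equation at the other root. The difference lies in which direction you cross. The paper rearranges the fixed-point equation $\omega=g_1(\omega)+g_2(\omega)$ into a cubic $g(\omega)$ and evaluates it at the single-class equilibrium, obtaining after a lengthy computation
\[
g(\xeq)=(\mu^{{\rm in}}-\mu^{{\rm in},1})(\mu^{{\rm in}}-\mu^{{\rm in},2})(\xeq-\alpha_I),
\]
then uses the explicit quadratic formula for $\xeq$ to check directly that $\xeq>\alpha_I\iff\alpha_I<1-\gamma/\beta$. You go the other way: you evaluate the single-class quadratic $G$ at $\omegaeq$, and your M\"obius-difference calculation produces the factorization with $(\mu^{{\rm in},1}-\mu^{{\rm in},2})^2$ and the linear factor $\alpha_I\gamma-(1-\alpha_I)\beta\omegaeq$ cleanly, without the ``lengthy computation'' the paper reports. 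The price you pay is the extra step through $\omega_0$ and the neat identity $F(\omega_0)=\alpha_I$, whereas the paper closes the loop immediately from the explicit formula for $\xeq$. Both routes are sound; yours makes the algebraic source of the squared factor and the threshold $1-\gamma/\beta$ more transparent, while the paper's avoids introducing the auxiliary point $\omega_0$.
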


\begin{proof}
For the single-class model, from \eqref{eq:SIS-logistic-x}, we have the equilibrium equation
\begin{equation*}\label{eq:SIS-logistic_eq}
\alpha_I\mu^{{\rm in}} + \beta \xeq(1-\xeq) - \gamma \xeq - \mu^{{\rm in}} \xeq \,=\, 0  ,\qquad \mu^{{\rm in}} \,=\,  \frac{d}{1-\hat q(d)},
\end{equation*}
with $\mu^{{\rm in}}$ given in \eqref{eq:tasas_eq}. Then, $\xeq$ satisfies
\begin{equation*}\label{eq:xeq_eq}
\beta(\xeq)^2 - (\beta- (\gamma + \mu^{{\rm in}}))\xeq -\alpha_I\mu^{{\rm in}} \,=\, 0,
\end{equation*}
from which we obtain the alternative equation
\begin{equation}\label{eq:xeq_eq_alt}
(\xeq)^2 \,=\, \frac{1}{\beta} \left[(\beta- (\gamma + \mu^{{\rm in}}))\xeq -\alpha_I\mu^{{\rm in}}\right],
\end{equation}
and the explicit expression for the nonnegative solution
\begin{equation}\label{eq:xeq_sol}
\xeq \,=\, \frac{1}{2\beta}\left[ (\beta- (\gamma + \mu^{{\rm in}})) +\sqrt{(\beta- (\gamma + \mu^{{\rm in}}))^2 + 4\alpha_I\mu^{{\rm in}}\beta} \right],
\end{equation}
which is strictly positive if $\alpha_I\mu^{{\rm in}}>0$.

For the two-class model, from \eqref{eq:SIS-logistic-xi}, we have the equilibrium equations
\begin{equation}\label{eq:SIS2class-logistic_eq}
\left\{\quad\begin{split}
\alpha_I\pi^1\mu^{{\rm in},1} + \beta(\xeqa+\xeqb)(\pi^1-\xeqa) - \gamma \xeqa - \mu^{{\rm in},1}\xeqa  = &\,0, \\
\alpha_I\pi^2\mu^{{\rm in},2} + \beta(\xeqa+\xeqb)(\pi^2-\xeqb) - \gamma \xeqb - \mu^{{\rm in},2}\xeqb = &\,0,
\end{split}\right.
\end{equation}
with $\pi^i$ and $\muini$ given by \eqref{eq:tasas_eq}. From \eqref{eq:SIS2class-logistic_eq}, we obtain
\begin{equation}\label{eq:xeqab_eq_alt0}
\beta(\xeqa+\xeqb) = \frac{ (\gamma + \mu^{{\rm in},1})\xeqa - \alpha_I\pi^1\mu^{{\rm in},1}}{\pi^1-\xeqa} = \frac{ (\gamma + \mu^{{\rm in},2})\xeqb - \alpha_I\pi^2\mu^{{\rm in},2}}{\pi^2-\xeqb}.
\end{equation}

From \eqref{eq:xeqab_eq_alt0}, we can write $\xeqa$ and $\xeqb$ as functions of $\omegaeq=\xeqa+\xeqb$ as
\begin{equation}\label{eq:xeqab_omega}
\xeqa = g_1(\omegaeq) := \pi^1\frac{ \beta\omegaeq + \alpha_I\mu^{{\rm in},1} }{\beta\omegaeq+\gamma+\mu^{{\rm in},1}},\quad \xeqb = g_2(\omegaeq):=\pi^2\frac{ \beta\omegaeq + \alpha_I\mu^{{\rm in},2}  }{\beta\omegaeq+\gamma+\mu^{{\rm in},2}}.
\end{equation}

Thus, summing both expressions in \eqref{eq:xeqab_omega}, $\omegaeq$ is a nonnegative solution of
\begin{equation}\label{eq:aux004}
\omegaeq = g_1(\omegaeq) + g_2(\omegaeq).
\end{equation}

We contend that, under the hypotheses $\alpha_I,\muini>0$, $i=1,2$, there exists a unique strictly positive solution of \eqref{eq:aux004}. Indeed, each of the functions $g_i(\cdot)$ has the form
\begin{equation*}
g_i(\omega) = \frac{a_i\omega+b_i}{c_i\omega+d_i},
\end{equation*}
with $a_i=\beta\pi^i>0$, $b_i=\alpha_I\pi^i\muini>0$, $c_i=\beta>0$, $d_i=\gamma+\muini>0$. Then, $g_i(\cdot)$ has a unique zero at $\omega_0^i=-b_i/a_i=-\alpha_I\muini/\beta<0$, it is undefined at $\omega_{\infty}^i=-d_i/c_i=-(\gamma+\muini)/\beta<0$, and it holds that $\omega_{\infty}<\omega_0$. Indeed, this is equivalent to $a_id_i-b_ic_i=\beta\pi^i(\gamma+(1-\alpha_I)\muini)>0$. Moreover,
\begin{equation*}
g_i'(\omega) = \frac{a_id_i-b_ic_i}{(c_i\omega+d_i)^2}
\end{equation*}
which is strictly positive for $\omega>\omega_{\infty}^i$ and decreases to 0 as $\omega\rightarrow\infty$. This shows that $g_i(\cdot)$ is strictly increasing and concave on the interval $(\omega_{\infty}^i,\infty)$ and positive on the interval $(\omega_{0}^i,\infty)\subseteq(\omega_{\infty}^i,\infty)$, with $g_i(0)=b_i/d_i>0$ and $\lim_{\omega\rightarrow\infty}g_i(\omega)=a_i/c_i=\pi^i>0$.
\medskip

Now, consider the function $g_{1,2}(\omega)=g_1(\omega)+g_2(\omega)$. This function is strictly increasing in the interval $(\max\{\omega_{\infty}^1,\omega_{\infty}^2\},\infty)$, with $g_{1,2}(0)>0$, and it has a horizontal asymptote as $\omega$ converges to infinity. Thus, there exists a unique strictly positive solution of equation $g_{1,2}(\omega)=\omega$, that is, of \eqref{eq:aux004}.

We refer to the unique positive solution of \eqref{eq:aux004} as $\omegaeq$, which can be equivalently written (under a rearrangement of the terms of \eqref{eq:aux004}) as the unique positive root of the function
\begin{equation*}\label{eq:omega_eq}
\begin{split}
g(\omega)\,:=\,& \beta^2\omega^3 + \beta(\mu^{{\rm in},1}+\mu^{{\rm in},2}+2\gamma-\beta)\omega^2  +((\gamma+\mu^{{\rm in},1})(\gamma+\mu^{{\rm in},2})\\[2mm]
&-\pi^1\beta(\alpha_I\mu^{{\rm in},1}+\gamma+\mu^{{\rm in},2})-\pi^2\beta(\alpha_I\mu^{{\rm in},2}+\gamma+\mu^{{\rm in},1}))\omega \\[2mm]
& - \alpha_I(\pi^1\mu^{{\rm in},1}(\gamma+\mu^{{\rm in},2}) + \pi^2\mu^{{\rm in},2}(\gamma+\mu^{{\rm in},1})  ).
\end{split}
\end{equation*}

The function $g(\cdot)$ is a third-degree polynomial, with $\lim_{\omega\rightarrow\infty}g(\omega)=\infty$, $g(0)<0$, and $g(\omegaeq)=0$. Thus, on the interval $[0,\infty)$, $g(\omega)<0$ if and only if $\omega<\omegaeq$, and $g(\omega)>0$ if and only if $\omega>\omegaeq$. To compare $\xeq$ and $\omegaeq$, it suffices to compute the sign of $g(\xeq)$, provided that $\xeq\geq0$. Using \eqref{eq:xeq_eq_alt} and \eqref{eq:aux005}, after a lengthy computation, we arrive at
\begin{equation*}
g(\xeq) = (\mu^{{\rm in}}-\mu^{{\rm in},1})(\mu^{{\rm in}}-\mu^{{\rm in},2})(\xeq-\alpha_I), 
\end{equation*}
where, from Corollary \ref{remark:orden_muis}, $\mu^{{\rm in},2}\leq\mu^{{\rm in}}\leq\mu^{{\rm in},1}$. Then,
\begin{equation*}
\xeq<\omegaeq~(\mbox{resp.}>,=)\quad \Leftrightarrow \quad g(\xeq)<0~(\mbox{resp.}>,=) \quad\Leftrightarrow\quad \xeq>\alpha_I~(\mbox{resp.}<,=).
\end{equation*}

Using the formula for $\xeq$ from \eqref{eq:xeq_sol}, we obtain the condition for the equilibrium $\xeq$ to be less than (resp. greater than, equal to) $\omegaeq$:
\begin{equation*}
\xeq>\alpha_I ~(\mbox{resp.}<,=) \quad\Leftrightarrow\quad 1-\frac{\gamma}{\beta}>\alpha_I ~(\mbox{resp.}<,=),
\end{equation*}
which concludes the proof.
\end{proof}

\begin{remark}
From Proposition \ref{prop:comparacion_clases_eq}, the condition $1-\frac{\gamma}{\beta}>\alpha_I$ for obtaining $\xeq<\omegaeq$ does not depend on the removal rate $d$ or on the parameters of the class separation. On the other hand, the threshold $1-\frac{\gamma}{\beta}$ is exactly the herd immunity threshold (i.e., $1 -1/\mathcal{R}_0$ with $\mathcal{R}_0=\beta/\gamma$) associated with the single-class model \eqref{eq:SIS-logistic-x} when $\mu^{\rm in} = 0$.
\end{remark}

\subsection{Numerical simulations}\label{subsec:simulations}

For the numerical simulations, we consider the epidemiological parameters $\beta=0.5$ and $\gamma=0.1$ and four cases of removal rates: null removal rate ($d=0$) and positive removal rates $d=0.1,\, 0.2, \, 0.3$. We consider $r^{\star}=5$ [years] as the maximum time for a sentence to be considered \emph{short}. Under these parameters, Proposition \ref{prop:comparacion_clases_eq} states that $\xeq<\omegaeq$ if $\alpha_I<1-\frac{\gamma}{\beta}=0.8$.

We study the cases of initial sentence lengths given by an exponential function
\begin{equation*}
q'(r) = \frac{1}{10}e^{-\frac{r}{10}},\quad r\geq0,
\end{equation*}
the mean sentence time of which is $T_q=10$ [years], and by a bimodal function
\begin{equation*}
q'(r) = 0.6\frac{1}{p_{5,1^2}}\phi(r;5,1^2) + 0.4\frac{1}{p_{10,1.5^2}}\phi(r;10,1.5^2), \quad r\geq0
\end{equation*}
where $\phi(\cdot;\mu,\sigma^2)$ denotes the probability density function of a normal distribution with mean $\mu$ and variance $\sigma^2$, and $p_{\mu,\sigma^2}=\int_0^{\infty}\phi(t;\mu,\sigma^2)dt$ is a normalizing constant, having the mean sentence time $T_q=7$ [years].

In the first example (exponential distribution), we obtain the proportions $p^1=0.39$ and $p^2=0.61$ and the mean initial sentence lengths by class $T_{q^1}=2.29$ [years] and $T_{q^2}=15$ [years]. In the second example, we have the proportions $p^1=0.30$ and $p^2=0.70$ and the mean initial sentence lengths by class $T_{q^1}=4.20$ [years] and $T_{q^2}=8.20$ [years].

In Figures \ref{fig:comp_eqs_rel_exp} (exponential) and \ref{fig:comp_eqs_rel_bimod} (bimodal), we compute the underestimation of the infected proportion of the population incurred by the single-class model with respect to the two-class model relative to the disease prevalence of new inmates $\alpha_I$, that is, $(\omegaeq-\xeq)/\alpha_I$, and relative to the single-class prevalence $\xeq$, that is, $(\omegaeq-\xeq)/\xeq$.

We show the plots for prevalences in the interval $\alpha_I\in [0,0.1]$, which is a range for $\alpha_I$ in which $\omegaeq>\xeq$, that is, the single-class model underestimates the number of infected people in the prison, according to Proposition \ref{prop:comparacion_clases_eq}.

\begin{figure}[h]
\begin{center}
\includegraphics[scale=0.6]{./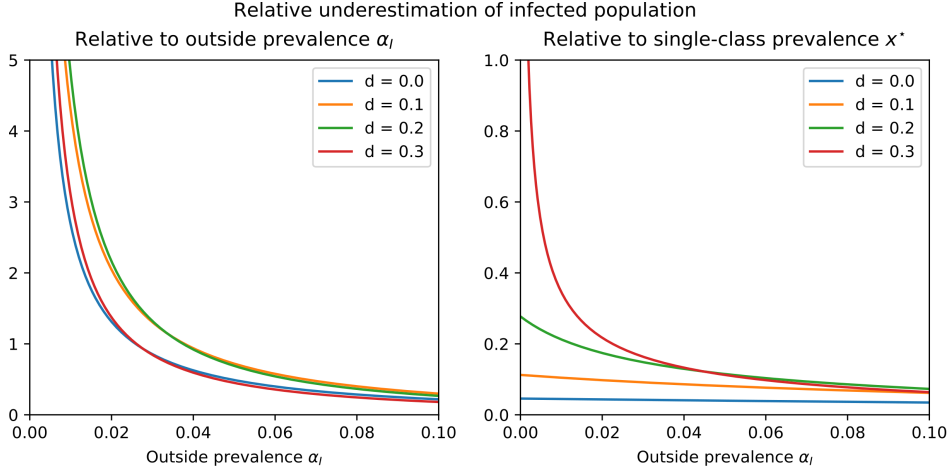}
\captionof{figure}{Underestimation of total infected proportion by the single-class model with respect to the two-class model, relative to outside prevalence (left) and relative to the single-class model prevalence (right) when considering an exponential distribution of initial sentencing lengths. }\label{fig:comp_eqs_rel_exp}
\end{center}
\end{figure}

\begin{figure}[h]
\begin{center}
\includegraphics[scale=0.6]{./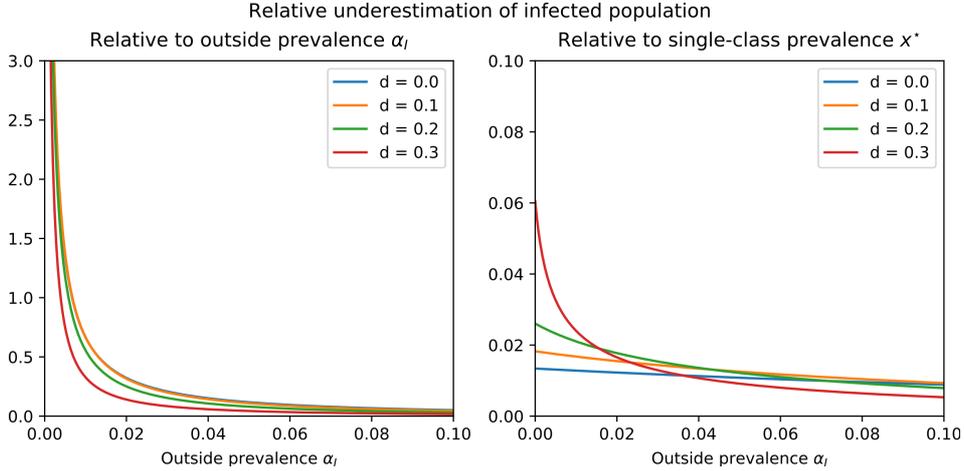}
\captionof{figure}{Underestimation of total infected proportion by the one-class model with respect to the two-class model, relative to outside prevalence (left) and relative to the single-class model prevalence (right) when considering a bimodal distribution of initial sentencing lengths. }\label{fig:comp_eqs_rel_bimod}
\end{center}
\end{figure}

For the second example (bimodal distribution), we depict in Figure \ref{fig:comp_dists_bimod} a comparison between the initial sentence length distribution and the remaining sentence time distribution. Note that the remaining sentence time distribution inside the prison approaches the initial sentence length as $d$ increases.

\begin{figure}[ht]
\begin{center}
\includegraphics[scale=0.6]{./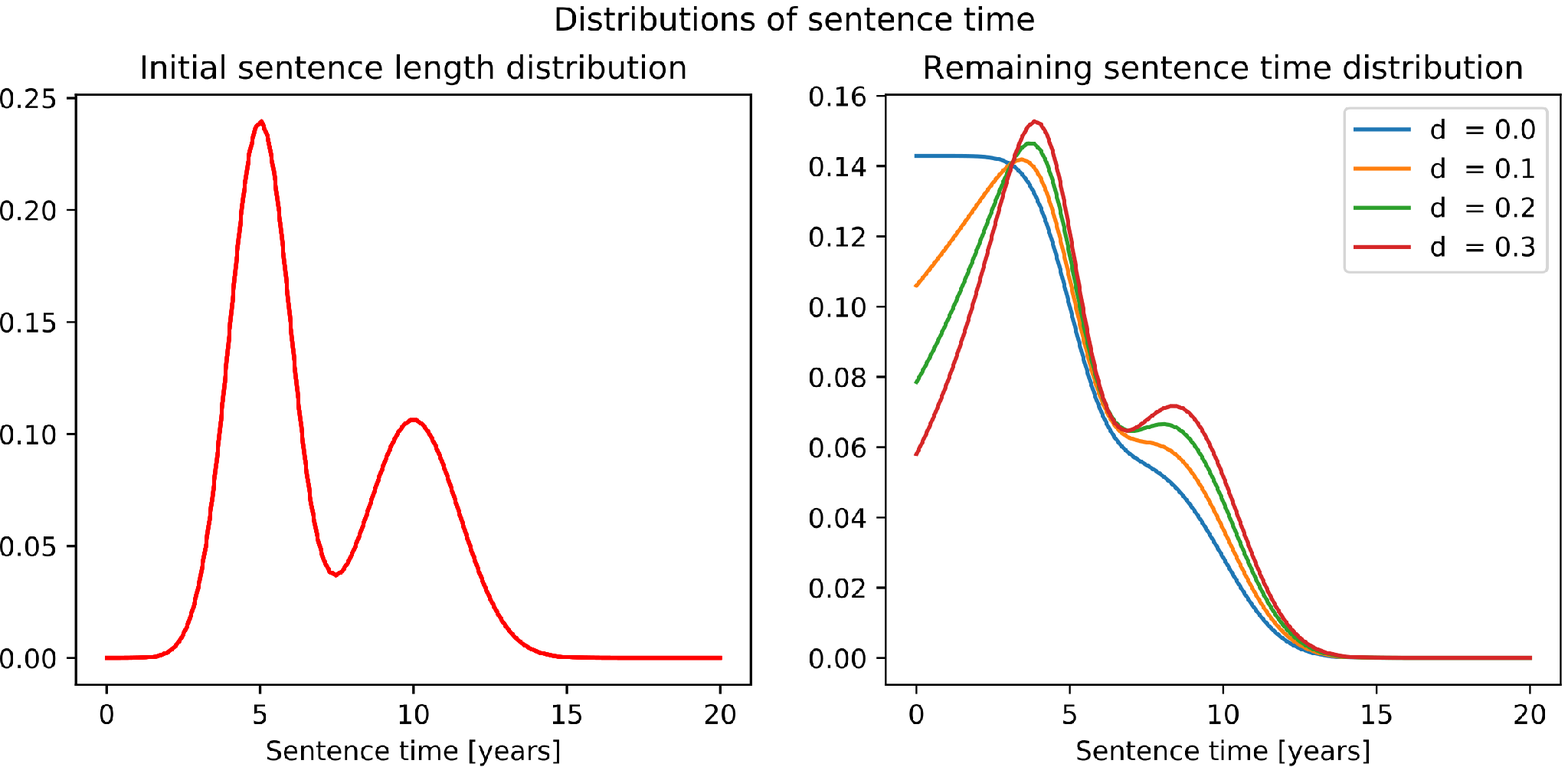}
\captionof{figure}{Comparison of initial sentence length distribution (bimodal) and remaining sentence time distribution inside the prison for different values of $d$. }\label{fig:comp_dists_bimod}
\end{center}
\end{figure}

\section{Conclusions}\label{sec:conclusion}

In this paper, we introduce an inmate population model with a sentencing length structure and find that the density of the number of inmates (with respect to the remaining time in prison) follows a transport equation, typically known as the McKendrick equation. We compute the inmate population and the entry/exit rates at steady state, showing that the sentencing length structure and the removal rate have a strong influence on these values. We illustrate how to obtain these values with real data from a prison in Chile. Since a typical assumption in prison models is a constant prison population, the obtained values can be used by decision-makers in the design or optimization of a penitentiary system.

To study the effect of considering or not considering sentence length structure in the estimation of the infected population, we divide the inmate population into two classes depending on their sentencing lengths (short and long), and we couple the SIS epidemiological model to the obtained model. This epidemiological model is compared, in equilibrium, with the model obtained when ignoring the sentencing length structure. We prove that not accounting for the structure of the sentence lengths for disease prevalences of new inmates below a certain threshold induces an underestimation of the prevalence in the prison population at steady state. The involved threshold depends on the basic reproduction number associated with the nonstructured SIS model with no entry of new inmates.

In epidemiological models for inmate populations, assuming that the prevalence of new inmates is low represents a situation where the disease under study is almost eradicated from the general population (outside the prison), but since the prison can act as a reservoir, the prevalence of this disease in the inmate population can eventually be much higher (see, for instance, \cite{castillo2020,Legrand:2008,Witbooi:2017}). Therefore, the message from our results is that, in these situations, a recommendation from the modeling perspective is to include the sentencing length structure of the prison population when analyzing epidemiological models.


\section{Acknowledgments}

This work was funded by FONDECYT grants N 1200355 (first author) and N 3180367 (second author) from ANID-Chile. The authors are very grateful to professors Heliana Arias (Universidad del Valle, Colombia) and Carla Castillo-Laborde (Universidad del Desarrollo, Chile) for fruitful discussions.


\bibliographystyle{siamplain}
\bibliography{preprint_inhomo_sis_prison_2021-01-29}


\appendix
\section{Convergence of nonnegative measures}

\begin{proposition}\label{prop:Apendice}
Consider $(\varphi_t)_{t\geq0}$ a family of nonnegative measures on $\mathcal B(\R_+)$. Suppose that this family satisfies the following hypothesis:
\begin{equation}\label{hip:medidas}
\forall t\geq0,\forall A\in \mathcal B(\R_+),\forall \varepsilon>0,\exists\eta\in(0,1):~ |z|,|z'|<\eta\Rightarrow |\varphi_{t+z}(A+z')-\varphi_{t}(A+z')|<\varepsilon.
\end{equation}
Then,
\begin{enumerate}
\item For almost every $t\geq0$, $0\leq s<r$,
\begin{equation*}
\lim_{\Delta t\searrow0}\frac{1}{\Delta t} \int_0^{\Delta t}|\varphi_{t+\Delta t-u}((s+u,r+u])-\varphi_{t}((s+u,r+u])|du = 0.
\end{equation*}
\item For almost every $t\geq0$, $0\leq s<r$,
\begin{equation*}
\lim_{\Delta t\searrow0}\frac{1}{\Delta t}\int_{0}^{\Delta t} \varphi_{t+\Delta t - u}((s+u,t+u])du = \varphi_t((s,r]).
\end{equation*}
\end{enumerate}
\end{proposition}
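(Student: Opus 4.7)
The plan is to derive both conclusions directly from the joint continuity hypothesis \eqref{hip:medidas}, exploiting the trivial observation that $(s+u, r+u] = A + u$ for $A := (s, r]$. With this reading, the proposition is just a statement about the stability of $\varphi_\tau(A + u)$ near $(\tau, u) = (t, 0)$, and the two choices $z := \Delta t - u$, $z' := u$ are the natural witnesses to plug into the hypothesis.

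For part 1, I would fix $t$, $s$, $r$ and $\varepsilon > 0$, and apply \eqref{hip:medidas} to $A = (s, r]$ with this $\varepsilon$ to obtain a witness $\eta \in (0, 1)$. For every $\Delta t \in (0, \eta)$ and every $u \in [0, \Delta t]$, the above $|z|$ and $|z'|$ are both bounded by $\Delta t < \eta$, so the hypothesis yields the pointwise bound
\begin{equation*}
|\varphi_{t + \Delta t - u}((s+u, r+u]) - \varphi_t((s+u, r+u])| < \varepsilon,
\end{equation*}
which survives averaging over $[0, \Delta t]$. This in fact works for every $t \geq 0$, so one obtains a slightly stronger statement than ``almost every''.

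For part 2, the plan is to split
\begin{equation*}
\frac{1}{\Delta t}\int_0^{\Delta t}\varphi_{t+\Delta t - u}((s+u, r+u])\,du = \frac{1}{\Delta t}\int_0^{\Delta t}\bigl[\varphi_{t+\Delta t - u} - \varphi_t\bigr]((s+u, r+u])\,du + \frac{1}{\Delta t}\int_0^{\Delta t}\varphi_t((s+u, r+u])\,du.
\end{equation*}
Part 1 drives the first summand to zero. For the second, I would invoke \eqref{hip:medidas} with $z = 0$, which gives continuity of the map $u \mapsto \varphi_t(A + u)$ at $u = 0$; then the standard fact that the average of a function continuous at $0$ over $[0, \Delta t]$ converges to its value at $0$ yields $\varphi_t((s, r])$ in the limit.

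There is no genuine obstacle here: the whole argument is a triangle-inequality decomposition combined with the correct identification of $z$ and $z'$ in the hypothesis. The only subtle point, if any, is recognizing that the hypothesis is stated in precisely the form needed to produce a bound uniform in $u \in [0, \Delta t]$ rather than merely a pointwise-in-$t$ convergence statement.
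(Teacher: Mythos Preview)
Your treatment of part 1 is correct and essentially identical to the paper's: choose $z=\Delta t-u$, $z'=u$, use the uniform bound from \eqref{hip:medidas}, and average.

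For part 2 your overall decomposition matches the paper's, but the justification you give for the second summand does not work. Setting $z=0$ in \eqref{hip:medidas} yields the vacuous inequality
\[
|\varphi_t(A+z')-\varphi_t(A+z')|=0<\varepsilon,
\]
because the hypothesis compares $\varphi_{t+z}(A+z')$ with $\varphi_t(A+z')$ (same shift $z'$ on both sides), not with $\varphi_t(A)$. So \eqref{hip:medidas} says nothing about continuity of $u\mapsto\varphi_t(A+u)$; that continuity has to come from somewhere else.

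The paper handles this second summand by a purely measure-theoretic argument, independent of \eqref{hip:medidas}: for $u$ small one writes
\[
\varphi_t((s+u,r+u])=\varphi_t((s,r])+\varphi_t((r,r+u])-\varphi_t((s,s+u]),
\]
so that
\[
\frac{1}{\Delta t}\int_0^{\Delta t}\varphi_t((s+u,r+u])\,du
=\varphi_t((s,r])+\frac{1}{\Delta t}\int_0^{\Delta t}f_r(u)\,du-\frac{1}{\Delta t}\int_0^{\Delta t}f_s(u)\,du,
\]
with $f_x(u):=\varphi_t((x,x+u])$. Each $f_x$ is nonnegative, nondecreasing, and satisfies $f_x(0)=0$; hence (by a Lebesgue differentiation / right-continuity argument) the averaged tail terms vanish as $\Delta t\searrow0$. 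This is the missing ingredient in your proposal; once you replace the incorrect appeal to \eqref{hip:medidas} by this decomposition, your argument coincides with the paper's.
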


\begin{proof}
\begin{enumerate}
\item Let $\varepsilon>0$. Take, for $t\geq0$ and $A=(s,r]\in\mathcal B(\R_+)$, $\eta\in(0,1)$ from \eqref{hip:medidas}. Then, there exists $\Delta t<\eta$ such that if $z'=u\leq\Delta t<\eta$, $z=\Delta t-u<\eta$, and then,
\begin{equation*}
\frac{1}{\Delta t}\int_0^{\Delta t}|\varphi_{t+\Delta t-u}((s+u,r+u])-\varphi_{t}((s+u,r+u])|du \leq \frac{1}{\Delta t}\epsilon\Delta t = \epsilon,
\end{equation*}
which proves the result.
\item We have
\begin{equation*}
\begin{split}
\int_{0}^{\Delta t} \varphi_{t+\Delta t - u}((s+u,r+u])du \leq \, \int_{0}^{\Delta t} |\varphi_{t+\Delta t - u}((s+u,r+u])\\
\,-\varphi_{t}((s+u,r+u])|du + \int_{0}^{\Delta t} \varphi_{t}((s+u,r+u])du.
\end{split}
\end{equation*}

The first term on the right-hand side can be bounded by the result in point 1. For the second term,
\begin{equation*}
\begin{split}
\int_{0}^{\Delta t} \varphi_{t}((s+u,r+u])du =&\, \int_{0}^{\Delta t} \varphi_{t}((s,r])du+\int_{0}^{\Delta t} \varphi_{t}((r,r+u])du\\
&\, -\int_{0}^{\Delta t} \varphi_{t}((s,s+u])du
\end{split}
\end{equation*}

The first term on the right-hand side of the previous expression is equal to $\Delta t\cdot \varphi_t((s,r])$. Define, for $t,r$ fixed, the function $f_r(u)=\varphi_t((r,r+u])$. This function is measurable, increasing, and bounded. Then, by \cite[Theorem 5.4.2]{Bog2007}, noting that $f_r(0) = \varphi_t((r,r])=0$,
\begin{equation*}
\begin{split}
\lim_{\Delta t\searrow0}\frac{1}{\Delta t}\int_{0}^{\Delta t} \varphi_{t}((r,r+u])du
= \lim_{\Delta t\searrow0}\frac{1}{\Delta t}\int_{0}^{\Delta t} f_r(u)du = f_r(0)=0,
\end{split}
\end{equation*}
and similarly replacing $r$ by $s$. Thus,
\begin{equation*}
\begin{split}
\lim_{\Delta t\searrow0}\int_{0}^{\Delta t} \varphi_{t}((s+u,r+u])du =&\, \varphi_{t}((s,r]).
\end{split}
\end{equation*}
\end{enumerate}
\end{proof}

\end{document}